\documentclass[american,aps,pra,reprint,superscriptaddress]{revtex4-1}
\usepackage[T1]{fontenc}
\usepackage[latin9]{inputenc}
\setcounter{secnumdepth}{3}
\usepackage{color}
\usepackage{babel}
\usepackage{amsmath}
\usepackage{amsthm}
\usepackage{amssymb}
\usepackage{graphicx}
\usepackage[unicode=true,pdfusetitle,
 bookmarks=true,bookmarksnumbered=false,bookmarksopen=false,
 breaklinks=false,pdfborder={0 0 0},backref=false,colorlinks=true]
 {hyperref}

\makeatletter
\theoremstyle{plain}
\newtheorem{thm}{\protect\theoremname}
\theoremstyle{plain}
\newtheorem{prop}[thm]{\protect\propositionname}
\ifx\proof\undefined
\newenvironment{proof}[1][\protect\proofname]{\par
\normalfont\topsep6\p@\@plus6\p@\relax
\trivlist
\itemindent\parindent
\item[\hskip\labelsep\scshape #1]\ignorespaces
}{%
\endtrivlist\@endpefalse
}
\providecommand{\proofname}{Proof}
\fi
\theoremstyle{plain}
\newtheorem{lem}[thm]{\protect\lemmaname}

\usepackage{txfonts}
\usepackage{braket}
\definecolor{urlcolor}{rgb}{0,0,0.7}

\makeatother

\providecommand{\lemmaname}{Lemma}
\providecommand{\propositionname}{Proposition}
\providecommand{\theoremname}{Theorem}

\begin{document}

\title{Towards resource theory of coherence in distributed scenarios}

\author{Alexander Streltsov}

\email{streltsov.physics@gmail.com}

\affiliation{Dahlem Center for Complex Quantum Systems, Freie Universität Berlin,
D-14195 Berlin, Germany}

\affiliation{ICFO -- Institut de Ciències Fotòniques, The Barcelona Institute
of Science and Technology, 08860 Castelldefels, Spain}

\author{Swapan Rana}

\author{Manabendra Nath Bera}

\affiliation{ICFO -- Institut de Ciències Fotòniques, The Barcelona Institute
of Science and Technology, 08860 Castelldefels, Spain}

\author{Maciej Lewenstein}

\affiliation{ICFO -- Institut de Ciències Fotòniques, The Barcelona Institute
of Science and Technology, 08860 Castelldefels, Spain}

\affiliation{ICREA -- Institució Catalana de Recerca i Estudis Avançats, Lluis
Companys 23, 08010 Barcelona, Spain}
\begin{abstract}
The search for a simple description of fundamental physical processes
is an important part of quantum theory. One example for such an abstraction
can be found in the distance lab paradigm: if two separated parties
are connected via a classical channel, it is notoriously difficult
to characterize all possible operations these parties can perform.
This class of operations is widely known as local operations and classical
communication (LOCC). Surprisingly, the situation becomes comparably
simple if the more general class of separable operations is considered,
a finding which has been extensively used in quantum information theory
for many years. Here, we propose a related approach for the resource
theory of quantum coherence, where two distant parties can only perform
measurements which do not create coherence and can communicate their
outcomes via a classical channel. We call this class local incoherent
operations and classical communication (LICC). While the characterization
of this class is also difficult in general, we show that the larger
class of separable incoherent operations (SI) has a simple mathematical
form, yet still preserving the main features of LICC. We demonstrate
the relevance of our approach by applying it to three different tasks:
assisted coherence distillation, quantum teleportation, and single-shot
quantum state merging. We expect that the results obtained in this
work also transfer to other concepts of coherence which are discussed
in recent literature. The approach presented here opens new ways to
study the resource theory of coherence in distributed scenarios.
\end{abstract}
\maketitle

\section{Introduction}

The resource theory of quantum coherence is a vivid research topic,
and various approaches in this direction have been presented over
the past few years \cite{Aberg2006,Gour2008,Gour2009,Levi-2014a,Baumgratz2014,Streltsov2016b}.
The formalism proposed recently by Baumgratz \emph{et al.} \cite{Baumgratz2014}
has triggered the interest of several authors, and a variety of results
have been obtained since then. One line of research is the formulation
and interpretation of new coherence quantifiers \cite{Girolami2014,Girolami2015,Yuan2015,Yao2015,Qi2015},
in particular those arising from quantum correlations such as entanglement
\cite{Streltsov2015,Killoran2015}. The study of coherence dynamics
under noisy evolution is another promising research direction~\cite{Bromley2015,Singh2015,Mani2015,Singh2015a,Chanda2015}.
The role of coherence in biological systems \cite{Lloyd-2011a,Li-2012a,Huelga-2013a},
thermodynamics \cite{Skrzypczyk2014,Aberg2014,Lostaglio2015a,Lostaglio2015b,Korzekwa2015},
spin models \cite{Karpat2014,Cakmak2015,Malvezzi2016}, and other
related tasks in quantum theory \cite{Chen2015,Cheng2015,Hu2015,Mondal2015,Deb2015,Du2015,Bai2015,Bera2015,Liu2015}
has also been investigated. 

In the framework introduced by Baumgratz \emph{et al.} \cite{Baumgratz2014},
quantum states which are diagonal in some fixed basis $\{\ket{i}\}$
are called incoherent: these are all states of the form $\rho=\sum_{k}p_{k}\ket{k}\!\bra{k}$.
A quantum operation is called incoherent if it can be written in the
form $\Lambda(\rho)=\sum_{l}K_{l}\rho K_{l}^{\dagger}$ with incoherent
Kraus operators $K_{l}$, i.e., $K_{l}\ket{m}\sim\ket{n}$, where
$\ket{m}$ and $\ket{n}$ are elements of the incoherent basis. Significant
progress within this resource theory has been achieved by Winter and
Yang~\cite{Winter2015}. In particular, they introduced the distillable
coherence and presented a closed formula for it for all quantum states.
Similar to the distillable entanglement \cite{Plenio2007,Horodecki2009},
the distillable coherence is defined as the maximal rate for extracting
maximally coherent single-qubit states 
\begin{equation}
\ket{\Psi_{2}}=\frac{1}{\sqrt{2}}(\ket{0}+\ket{1})
\end{equation}
from a given mixed state $\rho$ via incoherent operations. Another
closely related quantity is the relative entropy of coherence, initially
defined as \cite{Baumgratz2014} 
\begin{equation}
C_{r}(\rho)=\min_{\sigma\in{\cal I}}S(\rho||\sigma),
\end{equation}
where $S(\rho||\sigma)=\mathrm{Tr}[\rho\log_{2}\rho]-\mathrm{Tr}[\rho\log_{2}\sigma]$
is the relative entropy, and the minimum is taken over the set of
incoherent states ${\cal I}$. Crucially, the relative entropy of
coherence is equal to the distillable coherence and can be evaluated
exactly~\cite{Baumgratz2014,Winter2015}: 
\begin{equation}
C_{d}(\rho)=C_{r}(\rho)=S(\Delta(\rho))-S(\rho),
\end{equation}
where $S(\rho)=-\mathrm{Tr}[\rho\log_{2}\rho]$ is the von Neumann
entropy, and $\Delta(\rho)=\sum_{k}\braket{k|\rho|k}\ket{k}\!\bra{k}$
denotes dephasing of $\rho$ in the incoherent basis. 

Recently, various alternative concepts of coherence have been presented
in the literature. We will briefly review the most important approaches
in the following, and refer to~\cite{Chitambar2016,Marvian2016,deVicente2016,Streltsov2016b}
and references therein for more details. While all these notions agree
on the definition of incoherent states as states which are diagonal
in a fixed reference basis, they differ significantly in the definition
of the corresponding free operations. A notable approach in this context
is the notion of translationally invariant operations, these are operations
which commute with unitary translations $e^{-iHt}$ for some Hamiltonian
$H$ \cite{Gour2008,Gour2009}. As was shown by Marvian \emph{et al}.
\cite{Marvian2016a}, for nondegenerate Hamiltonians translationally
invariant operations are a proper subset of incoherent operations.
Moreover, translationally invariant operations have several desirable
properties, such as a free dilation: they can be implemented by introducing
an incoherent ancilla, performing a global incoherent unitary followed
by an incoherent measurement on the ancilla, and postselection on
the outcomes \cite{Marvian2016}. As was also shown in \cite{Marvian2016},
incoherent operations introduced by Baumgratz \emph{et al}.~\cite{Baumgratz2014}
in general do not have such a free dilation. While the existence of
a free dilation is clearly appealing from the resource-theoretic perspective,
the question if every reasonable resource theory should have a free
dilation is still not fully settled.

By a similar motivation, Chitambar and Gour \cite{Chitambar2016}
introduced the concept of physical incoherent operations. These operations
have a free dilation if one allows incoherent projective measurements
on the ancilla followed by classical processing of the outcomes. Interestingly,
the resource theory obtained in this way does not have a maximally
coherent state, i.e., there is no unique state from which all other
states can be obtained via physically incoherent operations. This
is also true for genuinely incoherent \cite{Streltsov2015b} and fully
incoherent operations \cite{deVicente2016}. Genuinely incoherent
operations are defined as operations which preserve all incoherent
states, they capture the framework of coherence under additional constrains
such as energy preservation \cite{Streltsov2015b}. Moreover, all
genuinely incoherent operations are incoherent regardless of a particular
experimental realization. Fully incoherent operations is the most
general set having this property \cite{deVicente2016}. 

An alternative approach to coherence was made by Yadin \emph{et al}.
\cite{Yadin2015}, who studied quantum processes which do not use
coherence. Such a process cannot be used to detect coherence in a
quantum state, i.e., an experimenter who has access to those operations
and incoherent von Neumann measurements will not be able to decide
if a quantum state has coherence or not. These operations coincide
with strictly incoherent operations, which were introduced earlier
by Winter and Yang \cite{Winter2015}. 

As has been shown in several recent works, quantum coherence plays
an important role in various tasks which are based on the laws of
quantum mechanics. One such task is quantum state merging, which was
first introduced and studied in~\cite{Horodecki2005a,Horodecki2007}.
The interplay between entanglement and local coherence in this task
was investigated very recently in~\cite{Streltsov2016}. An important
concept in this context is the notion of local quantum-incoherent
operations and classical communication~\cite{Chitambar2015}. This
class of operations is similar to the class of local operations and
classical communication where Alice and Bob can apply local measurements
and share their outcomes via a classical channel, with the only difference
that Bob's measurements have to be incoherent \cite{Chitambar2015}.

In this paper we will consider the situation where both parties, Alice
and Bob, can perform only incoherent measurement on their parts. The
corresponding class will be called local incoherent operations and
classical communication. Moreover, we will also generalize these notions
to separable operations known from entanglement theory \cite{Rains1997,Vedral1998,Horodecki2009},
thus introducing separable incoherent operations, and separable quantum-incoherent
operations. We will study the relation of all these classes among
each other, and apply them to the task of assisted coherence distillation,
which was first introduced in \cite{Chitambar2015}. We also introduce
and discuss the task of incoherent teleportation, and study the relation
between our classes on single-shot quantum state merging. As is discussed
in the conclusions of this paper, we expect that the ideas presented
in this work will find applications beyond quantum information theory,
most prominently in quantum thermodynamics and related research areas.

\section{\label{sec:SI}Classes of incoherent operations\protect \\
in distributed scenarios}

The framework of local operations and classical communication (LOCC)
is one of the most important concepts in entanglement theory, as it
describes all transformations which two separated parties (Alice and
Bob) can perform if they apply local quantum measurements and have
access to a classical channel \cite{Horodecki2009,Chitambar2014}.
These operations are difficult to capture mathematically, since a
general LOCC operation can involve an arbitrary number of rounds of
classical communication \cite{Horodecki2009,Chitambar2014}. However,
in many relevant cases it is enough to consider the informal definition
given above. In a similar fashion, we define the class of \textbf{\emph{L}}\emph{ocal
}\textbf{\emph{I}}\emph{ncoherent operations and }\textbf{\emph{C}}\emph{lassical
}\textbf{\emph{C}}\emph{ommunication} (LICC): these are LOCC operations
with the additional constraint that the local measurements of Alice
and Bob have to be incoherent. We will also consider the case where
Alice can perform arbitrary quantum measurements, while Bob is restricted
to incoherent measurements only. The corresponding class of operations
is called \textbf{\emph{L}}\emph{ocal }\textbf{\emph{Q}}\emph{uantum-}\textbf{\emph{I}}\emph{ncoherent
operations and }\textbf{\emph{C}}\emph{lassical }\textbf{\emph{C}}\emph{ommunication}
(LQICC) \cite{Chitambar2015}.

Another important framework in entanglement theory are separable operations.
While this class is larger than LOCC~\cite{Bennett1999}, it has
a simple mathematical description, and still preserves the main features
of LOCC \cite{Horodecki2009,Chitambar2014}. Separable operations
were initially introduced in \cite{Rains1997,Vedral1998} as follows:
\begin{equation}
\Lambda_{\mathrm{S}}[\rho^{AB}]=\sum_{i}A_{i}\otimes B_{i}\rho^{AB}A_{i}^{\dagger}\otimes B_{i}^{\dagger}.\label{eq:separable}
\end{equation}
The product operators $A_{i}\otimes B_{i}$ are Kraus operators, i.e.,
they fulfill the completeness condition 
\begin{equation}
\sum_{i}A_{i}^{\dagger}A_{i}\otimes B_{i}^{\dagger}B_{i}=\openone.\label{eq:completeness}
\end{equation}
The set of all separable operations will be called S. If all the operators
$A_{i}$ and $B_{i}$ are incoherent, i.e., if they satisfy the conditions
\begin{align}
A_{i}\ket{k}^{A} & \sim\ket{l}^{A},\\
B_{i}\ket{m}^{B} & \sim\ket{n}^{B},\label{eq:incoherent}
\end{align}
we will call the total operation \textbf{\emph{S}}\emph{eparable }\textbf{\emph{I}}\emph{ncoherent}
(SI)~\footnote{This should not be confused with strictly incoherent operations~\cite{Winter2015,Yadin2015}.}.
In the more general case where only Bob's operators $B_{i}$ are incoherent
-- and thus only Eq.~(\ref{eq:incoherent}) is satisfied -- the corresponding
operation will be called \textbf{\emph{S}}\emph{eparable }\textbf{\emph{Q}}\emph{uantum-}\textbf{\emph{I}}\emph{ncoherent}
(SQI). 

Having introduced the notion of LICC, LQICC, SI, and SQI operations,
we will now study the action of these operations on the initial incoherent
state $\ket{0}^{A}\ket{0}^{B}$. It is easy to see that the set of
states created from $\ket{0}^{A}\ket{0}^{B}$ via LICC and via SI
operations is the same, and given by states of the form 
\begin{equation}
\rho_{\mathrm{i}}=\sum_{k,l}p_{kl}\ket{k}\bra{k}^{A}\otimes\ket{l}\bra{l}^{B}.
\end{equation}
States of this form are known as fully incoherent states~\cite{Bromley2015,Streltsov2015},
and the set of all such states will be denoted by ${\cal I}$. Similarly,
the set of states created from $\ket{0}^{A}\ket{0}^{B}$ via LQICC
and via SQI operations is the set of quantum-incoherent states $\mathcal{QI}$.
These are states of the form \cite{Chitambar2015}
\begin{equation}
\rho_{\mathrm{qi}}=\sum_{j}p_{j}\sigma_{j}^{A}\otimes\ket{j}\bra{j}^{B}.
\end{equation}
From the above results we immediately see that SI operations map the
set of fully incoherent states $\mathcal{I}$ onto itself, and the
same is true for LICC operations. Similarly, SQI operations and LQICC
operations map the set of fully incoherent states ${\cal I}$ onto
the larger set of quantum-incoherent states $\mathcal{QI}$. These
statements are summarized in the following equalities: 
\begin{eqnarray}
\Lambda_{\mathrm{SI}}[\mathcal{I}] & = & \Lambda_{\mathrm{LICC}}[\mathcal{I}]=\mathcal{I},\label{eq:SI-1}\\
\Lambda_{\mathrm{SQI}}[\mathcal{I}] & = & \Lambda_{\mathrm{LQICC}}[\mathcal{I}]=\mathcal{QI}.\label{eq:SQI-1}
\end{eqnarray}

In general, LICC is the weakest set of operations, and the set of
separable operations S is the most powerful set of operations considered
here. Thus, we get the following inclusions:\begin{subequations}\label{eq:inclusions}
\begin{align}
\mathrm{LICC} & \subset\mathrm{SI}\subset\mathrm{SQI}\subset\mathrm{S},\label{eq:inclusion-1}\\
\mathrm{LICC} & \subset\mathrm{LQICC}\subset\mathrm{SQI}\subset\mathrm{S},\label{eq:inclusion-2}\\
\mathrm{LICC} & \subset\mathrm{LQICC}\subset\mathrm{LOCC}\subset\mathrm{S}.
\end{align}
\end{subequations} For all of the above inclusions it is straightforward
to see the weaker form $X\subseteq Y$, where $X$ and $Y$ is the
corresponding set of operations. For most of these inclusions, $X\subset Y$
can be then proven by applying the corresponding sets of operations
to the set of fully incoherent states ${\cal I}$. As an example,
$\mathrm{SI}\subset\mathrm{SQI}$ follows from the fact that $\Lambda_{\mathrm{SI}}[{\cal I}]={\cal I}$
while $\Lambda_{\mathrm{SQI}}[{\cal I}]=\mathcal{QI}$. The same arguments
apply to all the above inclusions apart from
\begin{align}
\mathrm{LICC} & \subset\mathrm{SI},\label{eq:LiccSi}\\
\mathrm{LQICC} & \subset\mathrm{SQI},\label{eq:LqiccSqi}
\end{align}
and $\mathrm{LOCC}\subset\mathrm{S}$. As already noted above Eq.~(\ref{eq:separable}),
the inclusion $\mathrm{LOCC}\subset\mathrm{S}$ was proven by Bennett
\emph{et al.} \cite{Bennett1999}, and the remaining two can be proven
using very similar arguments. In particular, Bennett \emph{et al.}
\cite{Bennett1999} presented a separable operation which cannot be
implemented via LOCC. The corresponding product operators of this
operation have the following form (see Eq.~(4) in \cite{Bennett1999}):
\begin{equation}
A_{i}\otimes B_{i}=\ket{i}\bra{\alpha_{i}}^{A}\otimes\ket{i}\bra{\beta_{i}}^{B}.\label{eq:discrimination}
\end{equation}
The particular expressions for $\ket{\alpha_{i}}$ and $\ket{\beta_{i}}$
were given in \cite{Bennett1999} (see also the Appendix), but are
not important for the rest of our proof. However, it is important
to note that the states $\ket{i}^{A}$ and $\ket{i}^{B}$ are incoherent
states of Alice and Bob respectively. It is straightforward to see
that this separable operation is also an SI operation. Moreover, since
this operation cannot be implemented via LOCC it also cannot be implemented
via LICC. This completes the proof of Eq.~(\ref{eq:LiccSi}). The
proof of Eq.~(\ref{eq:LqiccSqi}) follows by the same reasoning.

\begin{figure}
\includegraphics[width=0.8\columnwidth]{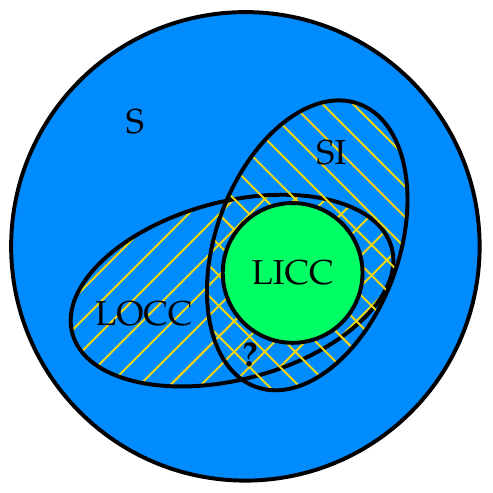}

\caption{\label{fig:hierarchy}Hierarchy of LICC, SI, LOCC, and separable operations
S. The set of LICC operations is the weakest set, and S is the most
powerful set. The crossed region displays operations which are in
LOCC and SI, but not in LICC. It remains open if such operations exist.
For simplicity, we do not display the sets LQICC and SQI. }
\end{figure}
The hierarchy of the sets LICC, SI, LOCC, and S is shown in Fig. \ref{fig:hierarchy}.
Note that the above reasoning also implies that the sets LOCC and
SI have an overlap, but one is not a subset of other. Moreover, the
figure also depicts a region of operations (crossed area) which are
simultaneously contained in LOCC and SI, but not in LICC. Such operations
would have the property that they cannot create any bipartite coherence.
On the other hand, they can be implemented via local operations and
classical communication, but for that require local coherent operations
on at least one of the parties. It remains an interesting open question
if such operations exist at all. If the answer to this question is
negative, the intersection of LOCC and SI is equal to LICC. We also
mention that similar questions arise if we consider the sets LQICC
and SQI. These sets were not included in Fig. \ref{fig:hierarchy}
for simplicity. Their inclusions are shown in Eq.~(\ref{eq:inclusions}).
In the following section we will apply the tools presented here to
the task of assisted coherence distillation, initially presented in
\cite{Chitambar2015}.

\section{Assisted coherence distillation}

\subsection{General setting}

The task of assisted coherence distillation via bipartite LQICC operations
was introduced and studied in \cite{Chitambar2015}. In this task,
Alice and Bob share many copies of a given state $\rho=\rho^{AB}$.
The aim of the process is to asymptotically distill maximally coherent
single qubit states on Bob's side. In particular, we are interested
in the maximal possible rate for this procedure.

In the following, we will extend this notion beyond LQICC operations.
For this we will consider the maximal amount of coherence that can
be distilled on Bob's side via the set of operations $X$, where $X$
is either LICC, LQICC, SI or SQI \footnote{Note that the set of LOCC operations and the set of separable operations
S cannot be included here, since they allow for local creation of
coherence on Bob's side.}. The corresponding distillable coherence on Bob's side will be denoted
by $C_{X}$ and can in general be given as follows:
\begin{equation}
C_{X}^{A|B}(\rho)=\sup\left\{ c:\lim_{n\rightarrow\infty}\left(\inf_{\Lambda\in X}\left\Vert \mathrm{Tr}_{A}\left[\Lambda[\rho^{\otimes n}]\right]-\ket{c}\bra{c}^{\otimes n}\right\Vert \right)=0\right\} ,\label{eq:Cx}
\end{equation}
where $\ket{c}=\ket{c}^{B}$ is a state on Bob's subsystem with distillable
coherence $C_{d}(\ket{c})=c$ and $||M||=\mathrm{Tr}[\sqrt{M^{\dagger}M}]$
is the trace norm. For more details regarding this definition of $C_{X}$
and for equivalent expressions we refer the reader to section \ref{sec:Remarks}.
The quantity $C_{\mathrm{LQICC}}^{A|B}$ was introduced in \cite{Chitambar2015},
where it was called \emph{distillable coherence of collaboration}. 

From Eqs.~(\ref{eq:inclusion-1}) and (\ref{eq:inclusion-2}) we
immediately see that all quantities $C_{X}$ considered here are between
$C_{\mathrm{LICC}}$ and $C_{\mathrm{SQI}}$. In the following we
will also consider the QI relative entropy which was defined in \cite{Chitambar2015}
as follows:
\begin{equation}
C_{r}^{A|B}(\rho)=\min_{\sigma\in{\cal QI}}S(\rho||\sigma).\label{eq:QIre-1}
\end{equation}
 As was also shown in \cite{Chitambar2015}, the QI relative entropy
can be written in closed form: 
\begin{equation}
C_{r}^{A|B}(\rho)=S(\Delta^{B}(\rho))-S(\rho).\label{eq:QIre-2}
\end{equation}
Note that the QI relative entropy is additive and does not increase
under SQI operations. Thus, it does not increase under any set of
operations $X$ considered here. 

It is interesting to compare the QI relative entropy to the basis-dependent
quantum discord, which was initially introduced in \cite{Ollivier2001}
and can be written as 
\begin{equation}
\delta^{A|B}(\rho)=I^{A:B}(\rho)-I^{A:B}(\Delta^{B}[\rho])
\end{equation}
with the mutual information $I^{A:B}(\rho)=S(\rho^{A})+S(\rho^{B})-S(\rho^{AB})$.
Recently, Yadin \emph{et al.} \cite{Yadin2015} have studied the role
of this quantity within the resource theory of coherence. Contrary
to the results presented in \cite{Ollivier2001}, the basis-dependent
discord vanishes on a larger set of states than the QI relative entropy.
While the latter is zero if and only if the corresponding state is
quantum-incoherent, the basis-dependent discord vanishes for all states
of the form $\rho=\sum_{i}p_{i}\rho_{i}^{A}\otimes\rho_{i}^{B}$,
where the states $\rho_{i}^{B}$ are perfectly distinguishable by
measurements in the incoherent basis \cite{Yadin2015}. This is in
particular the case if $\rho$ is quantum-incoherent or a product
state, and other examples have also been presented in~\cite{Yadin2015}.

Quite remarkably, we will see below that $C_{\mathrm{SI}}$ is equal
to $C_{\mathrm{SQI}}$ for all states $\rho$, and moreover all quantities
$C_{X}$ are bounded above by the QI relative entropy. The following
inequality summarizes these results: 
\begin{align}
C_{\mathrm{LICC}}^{A|B} & \leq C_{\mathrm{LQICC}}^{A|B}\leq C_{\mathrm{SI}}^{A|B}=C_{\mathrm{SQI}}^{A|B}\leq C_{r}^{A|B}.\label{eq:inequality}
\end{align}
The equality $C_{\mathrm{SI}}^{A|B}=C_{\mathrm{SQI}}^{A|B}$ will
be proven in the following proposition, and the bound $C_{\mathrm{SQI}}^{A|B}\leq C_{r}^{A|B}$
will be proven below in Theorem \ref{thm:bound}.
\begin{prop}
\label{prop:Csi=00003DCsqi}For an arbitrary bipartite state $\rho=\rho^{AB}$
the following equality holds:
\begin{equation}
C_{\mathrm{SI}}^{A|B}(\rho)=C_{\mathrm{SQI}}^{A|B}(\rho).
\end{equation}
\end{prop}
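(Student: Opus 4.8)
The plan is to prove the two inequalities $C_{\mathrm{SI}}^{A|B}\le C_{\mathrm{SQI}}^{A|B}$ and $C_{\mathrm{SQI}}^{A|B}\le C_{\mathrm{SI}}^{A|B}$ separately. The first is immediate: since $\mathrm{SI}\subset\mathrm{SQI}$ by Eq.~(\ref{eq:inclusion-1}), any protocol achieving rate $c$ via SI operations is also an SQI protocol, so the supremum defining $C_{\mathrm{SQI}}^{A|B}$ in Eq.~(\ref{eq:Cx}) is over a larger class and hence at least as large. The substance of the proposition is therefore the reverse inequality $C_{\mathrm{SQI}}^{A|B}\le C_{\mathrm{SI}}^{A|B}$.

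For the reverse direction, I would take an arbitrary SQI operation $\Lambda_{\mathrm{SQI}}$ with Kraus operators $A_i\otimes B_i$, where only the $B_i$ are required to be incoherent (Eq.~(\ref{eq:incoherent})), and show that it can be simulated, as far as the distilled state on Bob's side is concerned, by an SI operation followed by tracing out $A$. The key observation is that in the figure of merit (\ref{eq:Cx}) we discard Alice's system entirely: $\mathrm{Tr}_A[\Lambda[\rho^{\otimes n}]]$. So Alice's operators $A_i$ only affect the final state through the probabilities/weights with which Bob's branches occur, namely through $\mathrm{Tr}_A[A_i\sigma A_i^{\dagger}]$-type terms. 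Concretely, $\mathrm{Tr}_A[\Lambda_{\mathrm{SQI}}[\rho]] = \sum_i B_i\,\mathrm{Tr}_A[(A_i^{\dagger}A_i\otimes\openone)\rho]\,B_i^{\dagger} = \sum_i B_i\,\mathrm{Tr}_A[(M_i\otimes\openone)\rho]\,B_i^{\dagger}$ with $M_i=A_i^{\dagger}A_i\ge 0$. The plan is then to replace the POVM $\{M_i\}$ on Alice's side by a \emph{incoherent} measurement that reproduces the same statistics after the partial trace — the cleanest route is to let Alice first dephase her system (an incoherent operation) and then perform a measurement diagonal in the incoherent basis. Because $\mathrm{Tr}_A[(M_i\otimes\openone)\rho]$ depends only on the diagonal blocks of $M_i$ in the incoherent basis when $\rho$ is dephased, or more carefully: Alice can measure $\rho$ in the incoherent basis $\{\ket{k}^A\}$, obtaining outcome $k$ with some conditional post-measurement state on $B$, and then classically post-process $k$ — but SI does not allow classical post-processing beyond what the separable Kraus form encodes, so I must instead absorb the needed randomization into additional incoherent Kraus operators of the form $\sqrt{q_{i|k}}\,\ket{k}\!\bra{k}^A$.

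Thus the construction I would write down explicitly is: for each $i$, expand $A_i^{\dagger}A_i = \sum_{k} \lambda_{k}^{(i)}\,\ket{k}\!\bra{k}^A + (\text{off-diagonal})$; replace $A_i\otimes B_i$ by the collection $\{\sqrt{\lambda_k^{(i)}}\,\ket{k}\!\bra{k}^A\otimes B_i\}_{k}$, which are manifestly incoherent on both sides and, by construction, satisfy the completeness relation (\ref{eq:completeness}) because $\sum_{i,k}\lambda_k^{(i)}\ket{k}\!\bra{k}^A\otimes B_i^{\dagger}B_i = \sum_i \Delta^A(A_i^{\dagger}A_i)\otimes B_i^{\dagger}B_i = \Delta^A\!\otimes\!\mathrm{id}\big(\sum_i A_i^{\dagger}A_i\otimes B_i^{\dagger}B_i\big)=\Delta^A(\openone)\otimes\openone=\openone$. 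I would then verify that $\mathrm{Tr}_A$ of the output of this SI operation equals $\sum_{i,k}B_i\,\bra{k}^A\!(\,\mathrm{Tr}_{A\setminus\{k\}}?\,)$ — more simply, $\mathrm{Tr}_A[\sqrt{\lambda_k^{(i)}}\ket{k}\!\bra{k}^A\otimes B_i\,\rho\,\sqrt{\lambda_k^{(i)}}\ket{k}\!\bra{k}^A\otimes B_i^{\dagger}] = \lambda_k^{(i)}B_i\bra{k}^A\rho\ket{k}^A B_i^{\dagger}$, and summing over $k$ gives $B_i\,\mathrm{Tr}_A[\Delta^A(A_i^{\dagger}A_i)\otimes\openone\cdot\rho]\,B_i^{\dagger}$. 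This matches the SQI output after partial trace provided $\mathrm{Tr}_A[(A_i^{\dagger}A_i\otimes\openone)\rho]=\mathrm{Tr}_A[(\Delta^A(A_i^{\dagger}A_i)\otimes\openone)\rho]$, which is true in general since $\mathrm{Tr}_A$ against $M\otimes\openone$ only sees $\mathrm{diag}$ structure paired with $\rho$'s — wait, that is \emph{not} true in general, so the honest fix is to also dephase $\rho$ on $A$ first, or to note that we only need equality of the \emph{reduced states on $B$}, i.e. $\mathrm{Tr}_A[(M\otimes\openone)\rho]=\mathrm{Tr}_A[(\Delta^A(M)\otimes\openone)\rho]$ holds because $\mathrm{Tr}[M X]=\mathrm{Tr}[\Delta^A(M)X]$ fails in general — hence I would instead prepend the incoherent operation $\Delta^A$ applied to Alice's share of $\rho^{\otimes n}$, which is legitimate (it is $\sum_k \ket{k}\!\bra{k}^A\otimes\openone^{B}\cdot(\cdot)\cdot\ket{k}\!\bra{k}^A\otimes\openone^B$, an SI operation) and, crucially, does \emph{not} change the reduced state on $B$, hence does not affect the figure of merit; after this dephasing the identity becomes exact. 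The main obstacle is exactly this last point — making sure the SI simulation reproduces the SQI output \emph{on Bob's side} and that the auxiliary dephasing on Alice is both SI-admissible and harmless for the distillation rate; once that is handled, applying the construction to $\rho^{\otimes n}$ and passing to the limit in (\ref{eq:Cx}) shows every rate achievable by SQI is achievable by SI, giving $C_{\mathrm{SQI}}^{A|B}\le C_{\mathrm{SI}}^{A|B}$ and completing the proof.
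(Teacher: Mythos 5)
Your first direction (SI $\subseteq$ SQI) and your general strategy---exploit the fact that $A$ is traced out, so Alice's coherence can be ``erased''---are exactly right, and this is indeed the idea behind the paper's proof. But the execution has a genuine gap, which you half-noticed yourself: replacing $A_i\otimes B_i$ by $\{\sqrt{\lambda_k^{(i)}}\,\ket{k}\!\bra{k}^A\otimes B_i\}_k$ reproduces $\sum_i B_i\,\mathrm{Tr}_A[(\Delta^A(A_i^\dagger A_i)\otimes\openone)\rho]\,B_i^\dagger$, which differs from the true SQI output $\sum_i B_i\,\mathrm{Tr}_A[(A_i^\dagger A_i\otimes\openone)\rho]\,B_i^\dagger$ by the cross terms $\sum_{l\neq m}(A_i^\dagger A_i)_{ml}\,\bra{l}^A\rho\ket{m}^A$. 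Your fix---prepending $\Delta^A$ to the input---does not repair this: the resulting SI operation simulates $\mathrm{Tr}_A\!\left[\Lambda_{\mathrm{SQI}}[\Delta^A(\rho^{\otimes n})]\right]$, not $\mathrm{Tr}_A\!\left[\Lambda_{\mathrm{SQI}}[\rho^{\otimes n}]\right]$. The claim that prepending $\Delta^A$ ``does not affect the figure of merit'' confuses the input marginal with the output marginal: the figure of merit is Bob's state \emph{after} an operation that correlates $A$ and $B$, and destroying Alice's coherence beforehand can destroy her ability to steer Bob. Concretely, for $\rho^{AB}=\ket{\phi^+}\!\bra{\phi^+}$ with $\ket{\phi^+}=(\ket{00}+\ket{11})/\sqrt{2}$ one has $C_{\mathrm{SQI}}^{A|B}(\rho)=1$ (Alice measures in the $\ket{\pm}$ basis), whereas $\Delta^A(\rho)=\tfrac12(\ket{00}\!\bra{00}+\ket{11}\!\bra{11})$ is quantum-incoherent and yields rate zero; so protocols acting on the dephased input cannot match the SQI rate, and your argument for $C_{\mathrm{SQI}}^{A|B}\le C_{\mathrm{SI}}^{A|B}$ collapses.

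The repair is to dephase Alice \emph{after} the SQI operation rather than before, which is what the paper does: define $\Lambda_{\mathrm{SI}}[\rho]=\sum_j\Pi_j^A\,\Lambda_{\mathrm{SQI}}[\rho]\,\Pi_j^A$ with $\Pi_j^A=\ket{j}\!\bra{j}^A$. Its Kraus operators $(\Pi_j^A A_i)\otimes B_i$ are incoherent on both sides (since $\Pi_j^A A_i\ket{m}\sim\ket{j}$) and satisfy completeness because $\sum_j\Pi_j^A=\openone^A$, so the map is SI; and since $\mathrm{Tr}_A\circ\Delta^A=\mathrm{Tr}_A$, it produces \emph{exactly} the same reduced state on $B$ as $\Lambda_{\mathrm{SQI}}$ applied to the original (undephased) input. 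With this replacement your limiting argument on $\rho^{\otimes n}$ goes through verbatim and yields $C_{\mathrm{SQI}}^{A|B}\le C_{\mathrm{SI}}^{A|B}$.
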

\begin{proof}
In the first step of the proof we will show that for any state $\rho^{AB}$
and an arbitrary SQI operation $\Lambda_{\mathrm{SQI}}$ there exists
an SI operation $\Lambda_{\mathrm{SI}}$ leading to the same reduced
state of Bob: 
\begin{equation}
\mathrm{Tr}_{A}\left[\Lambda_{\mathrm{SI}}[\rho^{AB}]\right]=\mathrm{Tr}_{A}\left[\Lambda_{\mathrm{SQI}}[\rho^{AB}]\right].\label{eq:SQI}
\end{equation}
The desired SI operation is given by 
\begin{equation}
\Lambda_{\mathrm{SI}}[\rho^{AB}]=\sum_{i}\Pi_{i}^{A}\Lambda_{\mathrm{SQI}}[\rho^{AB}]\Pi_{i}^{A},
\end{equation}
where $\Pi_{i}^{A}=\ket{i}\bra{i}^{A}$ is a complete set of orthogonal
projectors onto the incoherent basis of Alice. It is straightforward
to see that the above operation satisfies Eq.~(\ref{eq:SQI}) and
is indeed separable and incoherent. 

Now, given a state $\rho=\rho^{AB}$ with $C_{\mathrm{SQI}}^{A|B}(\rho)=c$,
for any $\varepsilon>0$ there exists an integer $n\geq1$ and an
SQI operation $\Lambda_{\mathrm{SQI}}$ acting on $n$ copies of $\rho$
such that 
\begin{equation}
\left\Vert \mathrm{Tr}_{A}\left[\Lambda_{\mathrm{SQI}}[\rho^{\otimes n}]\right]-\ket{c}\bra{c}^{\otimes n}\right\Vert \leq\varepsilon,
\end{equation}
where $\ket{c}$ is a pure state with $C_{d}(\ket{c})=c$. By using
Eq.~(\ref{eq:SQI}) it follows that for any $\varepsilon>0$ and
some integer $n\geq1$ there also exists an SI operation $\Lambda_{\mathrm{SI}}$
with the same property: 
\begin{equation}
\left\Vert \mathrm{Tr}_{A}\left[\Lambda_{\mathrm{SI}}[\rho^{\otimes n}]\right]-\ket{c}\bra{c}^{\otimes n}\right\Vert \leq\varepsilon.
\end{equation}
These arguments show that $C_{\mathrm{SI}}$ is bounded below by $C_{\mathrm{SQI}}$.
The proof of the theorem is complete by noting that $C_{\mathrm{SI}}$
is also bounded above by $C_{\mathrm{SQI}}$ since $\mathrm{SI}\subset\mathrm{SQI}$.
\end{proof}
This proposition shows that SQI operations do not provide an advantage
when compared to SI operations in the considered task: both sets of
operations lead to the same maximal performance. This result is remarkable
since the sets SI and SQI are not equal. It is now tempting to assume
that the same method can also be used to prove that $C_{\mathrm{LICC}}^{A|B}$
is equal to $C_{\mathrm{LQICC}}^{A|B}$, i.e., that quantum operations
on Alice's side do not provide any advantage for assisted coherence
distillation. Note however that the above proof does not apply to
this scenario, and thus the question remains open for general mixed
states. However, as we will see in the next section, for pure states
$C_{\mathrm{LICC}}^{A|B}$ is indeed equal to $C_{\mathrm{LQICC}}^{A|B}$.

In the following theorem we will prove that $C_{\mathrm{SQI}}$ is
bounded above by the QI relative entropy. This will complete the proof
of the inequality (\ref{eq:inequality}).
\begin{thm}
\label{thm:bound}For any bipartite state $\rho=\rho^{AB}$ holds:
\begin{equation}
C_{\mathrm{SQI}}^{A|B}(\rho)\leq C_{r}^{A|B}(\rho).
\end{equation}
\end{thm}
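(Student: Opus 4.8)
The plan is to show that the QI relative entropy $C_r^{A|B}$ is a monotone under SQI operations that never increases when we distill, and then bound the distillation rate by it. More precisely, I would argue in three steps: monotonicity under SQI, evaluation on the target state, and an asymptotic continuity argument to pass from the approximate transformation in the definition of $C_{\mathrm{SQI}}^{A|B}$ in Eq.~(\ref{eq:Cx}) to the rate bound.

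First I would establish that $C_r^{A|B}$ does not increase under SQI operations. Since $C_r^{A|B}(\rho)=\min_{\sigma\in\mathcal{QI}}S(\rho\|\sigma)$ by Eq.~(\ref{eq:QIre-1}), and $S(\rho\|\sigma)$ is monotone under any CPTP map, it suffices to check that an SQI operation $\Lambda_{\mathrm{SQI}}$ maps $\mathcal{QI}$ into $\mathcal{QI}$: taking an optimal $\sigma\in\mathcal{QI}$ for $\rho$, we get $C_r^{A|B}(\Lambda_{\mathrm{SQI}}[\rho]) \le S(\Lambda_{\mathrm{SQI}}[\rho]\|\Lambda_{\mathrm{SQI}}[\sigma]) \le S(\rho\|\sigma) = C_r^{A|B}(\rho)$. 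That $\Lambda_{\mathrm{SQI}}[\mathcal{QI}]\subseteq\mathcal{QI}$ should follow from the Kraus form: with product Kraus operators $A_i\otimes B_i$ where each $B_i$ is incoherent, the image of $\sum_j p_j\sigma_j^A\otimes\ket{j}\!\bra{j}^B$ is again a convex mixture of terms $\tilde\sigma^A\otimes\ket{n}\!\bra{n}^B$ after the appropriate regrouping, hence quantum-incoherent. I would also invoke the already-stated facts (below Eq.~(\ref{eq:QIre-2})) that $C_r^{A|B}$ is additive, which is needed so that $C_r^{A|B}(\rho^{\otimes n}) = n\,C_r^{A|B}(\rho)$.

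Next, I would compute $C_r^{A|B}$ on the distillation target. For a pure state $\ket{c}^B$ on Bob with $C_d(\ket{c})=c$, we have $C_r^{A|B}(\ket{c}\!\bra{c}^B) = S(\Delta^B(\ket{c}\!\bra{c})) - S(\ket{c}\!\bra{c}) = C_d(\ket{c}) = c$ using Eq.~(\ref{eq:QIre-2}) together with the single-party formula $C_r = C_d = S(\Delta(\rho))-S(\rho)$ from the introduction (here $\rho^A$ is trivial so the bipartite formula reduces to the single-party one on Bob). Hence $C_r^{A|B}(\ket{c}\!\bra{c}^{\otimes n}) = nc$. Now for any $\varepsilon>0$ and the SQI operation $\Lambda_{\mathrm{SQI}}$ achieving $\|\mathrm{Tr}_A[\Lambda_{\mathrm{SQI}}[\rho^{\otimes n}]] - \ket{c}\!\bra{c}^{\otimes n}\| \le \varepsilon$, monotonicity gives $n\,C_r^{A|B}(\rho) = C_r^{A|B}(\rho^{\otimes n}) \ge C_r^{A|B}(\Lambda_{\mathrm{SQI}}[\rho^{\otimes n}]) \ge C_r^{A|B}(\mathrm{Tr}_A[\Lambda_{\mathrm{SQI}}[\rho^{\otimes n}]])$, the last step because discarding system $A$ is itself CPTP (and maps $\mathcal{QI}$ to incoherent states on $B$, so $C_r^{A|B}$ of the reduced state equals the ordinary $C_r$).

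The main obstacle — and the step I would spend the most care on — is controlling $C_r^{A|B}(\mathrm{Tr}_A[\Lambda_{\mathrm{SQI}}[\rho^{\otimes n}]])$ from below by something close to $nc$, despite only having closeness in trace norm, not equality, to the target $\ket{c}\!\bra{c}^{\otimes n}$. This is where asymptotic continuity of the relative entropy of coherence (equivalently of the von Neumann entropy via a Fannes–Audenaert-type bound) enters: $|C_r^{A|B}(\omega) - C_r^{A|B}(\ket{c}\!\bra{c}^{\otimes n})| \le f(\varepsilon)\, n\log d + g(\varepsilon)$ for a function $f(\varepsilon)\to 0$ as $\varepsilon\to 0$, where $d$ is Bob's local dimension. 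Combining, $n\,C_r^{A|B}(\rho) \ge nc - f(\varepsilon)\,n\log d - g(\varepsilon)$, so dividing by $n$, letting $n\to\infty$ and then $\varepsilon\to 0$ yields $C_r^{A|B}(\rho) \ge c$. Taking the supremum over achievable $c$ gives $C_{\mathrm{SQI}}^{A|B}(\rho)\le C_r^{A|B}(\rho)$, completing the proof. The delicate point is making the continuity estimate uniform in $n$ in the right way so that the $\log d$ factor is harmless after dividing by $n$; I expect the cleanest route is to apply continuity on Bob's system alone to the states $\mathrm{Tr}_A[\Lambda_{\mathrm{SQI}}[\rho^{\otimes n}]]$ and $\ket{c}\!\bra{c}^{\otimes n}$, whose trace distance is bounded by $\varepsilon$ by assumption and whose dimension grows only exponentially in $n$.
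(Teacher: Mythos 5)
Your proposal is correct and follows essentially the same route as the paper's proof: monotonicity of the QI relative entropy under SQI operations, its additivity, and an asymptotic-continuity (Fannes-type) estimate whose $n\log d$ correction becomes harmless after dividing by $n$. The only differences are cosmetic — you trace out Alice and apply continuity to Bob's reduced state directly (the paper instead compares the full output to $\ket{0}\!\bra{0}^{A}\otimes\ket{\phi}\!\bra{\phi}^{B\otimes n}$), and you spell out why SQI maps $\mathcal{QI}$ into $\mathcal{QI}$, which the paper simply cites.
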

\begin{proof}
The proof goes similar lines of reasoning as the proof of Theorem
3 in Ref. \cite{Chitambar2015}. From the definition of $C_{\mathrm{SQI}}$
in Eq.~(\ref{eq:Cx}) it follows that for any $\varepsilon>0$ there
exists a state $\ket{\phi}$, an integer $n>1$, and an SQI protocol
$\Lambda_{\mathrm{SQI}}$ acting on $n$ copies of $\rho=\rho^{AB}$
such that 
\begin{align}
C_{\mathrm{SQI}}^{A|B}(\rho)-C_{r}(\ket{\phi}) & \leq\varepsilon,\label{eq:proof-1}\\
\left\Vert \Lambda_{\mathrm{SQI}}[\rho^{\otimes n}]-\rho_{f}^{\otimes n}\right\Vert  & \leq\varepsilon\label{eq:proof-2}
\end{align}
with the final state $\rho_{f}=\ket{0}\bra{0}^{A}\otimes\ket{\phi}\bra{\phi}^{B}$.

Eq.~(\ref{eq:proof-2}) together with the continuity of QI relative
entropy~\footnote{The QI relative entropy is continuous in the following sense~\cite{Chitambar2015}:
for any two states $\rho=\rho^{XY}$ and $\sigma=\sigma^{XY}$ with
$||\rho-\sigma||\leq1$ it holds that $|C_{r}^{X|Y}(\rho)-C_{r}^{X|Y}(\sigma)|\leq2T\log_{2}d_{XY}+2h(T)$,
where $T=||\rho-\sigma||/2$ is the trace distance, $d_{XY}$ is the
total dimension, and $h(x)=-x\log_{2}x-(1-x)\log_{2}(1-x)$ is the
binary entropy.} implies that for any $0<\varepsilon\leq1/2$ there exists an integer
$n\geq1$ and an SQI protocol $\Lambda_{\mathrm{SQI}}$ acting on
$n$ copies of $\rho$ such that
\begin{equation}
C_{r}^{A|B}(\Lambda_{\mathrm{SQI}}[\rho^{\otimes n}])\geq C_{r}^{A|B}(\rho_{f}^{\otimes n})-2n\varepsilon\log_{2}d-2h(\varepsilon),
\end{equation}
where $h(x)=-x\log_{2}x-(1-x)\log_{2}(1-x)$ is the binary entropy
and $d$ is the dimension of $AB$. Now we use the fact that the QI
relative entropy is additive \cite{Chitambar2015} and does not increase
under SQI operations. This means that for any $0<\varepsilon\leq1/2$
there exists an integer $n\geq1$ such that 
\begin{equation}
C_{r}^{A|B}(\rho)\geq C_{r}^{A|B}(\rho_{f})-2\varepsilon\log_{2}d-\frac{2}{n}h(\varepsilon).
\end{equation}
The above inequality together with the fact $C_{r}^{A|B}(\rho_{f})=C_{r}(\ket{\phi})$
and Eq.~(\ref{eq:proof-1}) implies that for any $0<\varepsilon\leq1/2$
there exists an integer $n\geq1$ such that 
\begin{equation}
C_{r}^{A|B}(\rho)\geq C_{\mathrm{SQI}}^{A|B}(\rho)-\varepsilon-2\varepsilon\log_{2}d-\frac{2}{n}h(\varepsilon).
\end{equation}
This completes the proof of the theorem.
\end{proof}
Proposition \ref{prop:Csi=00003DCsqi} and Theorem \ref{thm:bound}
in combination imply Eq.~(\ref{eq:inequality}). It remains an open
question if the inequalities in Eq.~(\ref{eq:inequality}) are strict.
As we will see in the next section, this is not the case for pure
state: in this case all quantities are equal to the von Neumann entropy
of the fully decohered state of Bob $\Delta(\rho^{B})$. 

Before we turn our attention to this question, we will first characterize
all quantum states which are useful for assisted coherence distillation
via the sets of operations $X$ considered above. Note that a quantum-incoherent
state cannot be used for extraction of coherence on Bob's side via
any set of operations $X$. On the other hand, as is shown in the
following theorem, any state which is not quantum-incoherent can be
used for extracting coherence via LICC. 
\begin{thm}
\label{thm:nonzero}A state $\rho=\rho^{AB}$ has $C_{\mathrm{LICC}}^{A|B}(\rho)>0$
if and only if it is not quantum-incoherent.\end{thm}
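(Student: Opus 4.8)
The plan is to prove the two implications separately; the forward one is a one-liner, the converse needs an explicit LICC protocol. If $\rho=\rho^{AB}$ is quantum-incoherent then $\rho\in\mathcal{QI}$, so $C_{r}^{A|B}(\rho)=\min_{\sigma\in\mathcal{QI}}S(\rho||\sigma)=0$, and the inequality~(\ref{eq:inequality}) — which is established by Proposition~\ref{prop:Csi=00003DCsqi} together with Theorem~\ref{thm:bound} — gives $C_{\mathrm{LICC}}^{A|B}(\rho)\leq C_{r}^{A|B}(\rho)=0$. (A self-contained alternative: $\mathrm{LICC}\subset\mathrm{SQI}$ and SQI operations map $\mathcal{QI}$ into itself, so Bob's reduced output is always incoherent and hence cannot approach the pure, highly coherent target $\ket{c}\!\bra{c}^{\otimes n}$ as $n\to\infty$.)

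For the converse I would build an explicit LICC protocol with strictly positive distillation rate. The enabling observation is that Alice can realize \emph{any} rank-one POVM $\{\ket{w_{i}}\!\bra{w_{i}}^{A}\}$ by an incoherent instrument: the Kraus operators $A_{i}=\ket{0}\!\bra{w_{i}}^{A}$ satisfy $A_{i}\ket{k}^{A}\propto\ket{0}^{A}$, so they are incoherent in the sense of Eq.~(\ref{eq:incoherent}), while $\sum_{i}A_{i}^{\dagger}A_{i}=\sum_{i}\ket{w_{i}}\!\bra{w_{i}}^{A}=\openone$; since only Bob's reduced state enters Eq.~(\ref{eq:Cx}), ``resetting'' Alice's register to $\ket{0}^{A}$ is harmless, so together with classical communication of the outcome this is a legitimate LICC operation. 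Now write $\rho=\sum_{m,n}\tau_{mn}^{A}\otimes\ket{m}\!\bra{n}^{B}$ with $\tau_{mn}^{A}=(\openone^{A}\otimes\bra{m}^{B})\rho(\openone^{A}\otimes\ket{n}^{B})$; then $\rho\in\mathcal{QI}$ precisely when $\tau_{mn}^{A}=0$ for all $m\neq n$. So if $\rho\notin\mathcal{QI}$ there is a pair $m\neq n$ with $\tau_{mn}^{A}\neq0$, and by polarization over $\mathbb{C}$ a unit vector $\ket{w}^{A}$ with $\braket{w|\tau_{mn}^{A}|w}\neq0$.

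I would then have Alice extend $\ket{w}^{A}$ to an orthonormal basis $\{\ket{w_{i}}^{A}\}$, perform the incoherent instrument $\{\ket{0}\!\bra{w_{i}}^{A}\}$, and report the outcome. Conditioned on the outcome attached to $\ket{w}^{A}$, which has positive probability $p$, Bob holds $\rho_{w}^{B}\propto\sum_{m,n}\braket{w|\tau_{mn}^{A}|w}\ket{m}\!\bra{n}^{B}$, whose $(m,n)$-entry is nonzero, hence $\Delta(\rho_{w}^{B})\neq\rho_{w}^{B}$ and $C_{d}(\rho_{w}^{B})=S(\Delta(\rho_{w}^{B}))-S(\rho_{w}^{B})>0$. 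Running the instrument on each of $n$ copies of $\rho$, a standard typicality argument produces about $pn$ copies with this outcome; Alice announces them, and Bob applies ordinary incoherent coherence distillation to the resulting $\approx pn$ copies of $\rho_{w}^{B}$, extracting $\approx pn\,C_{d}(\rho_{w}^{B})$ maximally coherent qubits, which by the asymptotic reversibility of pure-state coherence reshape into $\ket{c}\!\bra{c}^{\otimes n}$ for any $c<p\,C_{d}(\rho_{w}^{B})$ (see Sec.~\ref{sec:Remarks} for the equivalent formulations of $C_{X}$). Hence $C_{\mathrm{LICC}}^{A|B}(\rho)\geq p\,C_{d}(\rho_{w}^{B})>0$.

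I expect the main obstacle to be bookkeeping rather than anything conceptual: confirming that the ``reset'' Kraus operators, the classical communication, and Bob's distillation together literally satisfy the definition of LICC; controlling the fluctuations in the number of favourable outcomes over the $n$ copies; and matching the single-shot construction to the asymptotic figure of merit in Eq.~(\ref{eq:Cx}), including the final rate conversion from $\ket{\Psi_{2}}$'s into copies of $\ket{c}^{B}$.
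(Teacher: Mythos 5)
Your proposal is correct, and it reaches the result by a route that differs from the paper's in the key technical step. The overall protocol is the same in both cases: Alice performs a local incoherent rank-one instrument of the form $\ket{0}\bra{w_i}^{A}$, steers Bob with nonzero probability into a coherent state, and Bob then distills via the methods of Winter and Yang \cite{Winter2015}. The difference lies in how the suitable measurement vector is found. The paper decomposes $\rho$ into blocks $N_{ij}^{B}=\bra{e_i}^{A}\rho\ket{e_j}^{A}$ with respect to a basis on \emph{Alice's} side and then argues by cases: if some diagonal block $N_{ii}$ is coherent, Alice uses $\ket{i}\bra{e_i}^{A}$; otherwise it constructs $\ket{e_P}$ or $\ket{e_Q}$ as a superposition $\cos\theta\ket{e_k}+(i)\sin\theta\ket{e_l}$ and invokes the linear independence of $\cos^{2}\theta$, $\sin^{2}\theta$, $\sin\theta\cos\theta$ to secure a nonzero outcome probability. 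You instead decompose into blocks $\tau_{mn}^{A}=(\openone^{A}\otimes\bra{m}^{B})\rho(\openone^{A}\otimes\ket{n}^{B})$ with respect to \emph{Bob's} incoherent basis, note that non-QI means $\tau_{mn}^{A}\neq0$ for some $m\neq n$, and use complex polarization (a quadratic form $\braket{w|\tau_{mn}^{A}|w}$ vanishing for all $w$ forces $\tau_{mn}^{A}=0$) to produce a single vector $\ket{w}^{A}$ that works in all cases; positivity of the outcome probability is then automatic, since the unnormalized conditional state $(\bra{w}^{A}\otimes\openone^{B})\rho(\ket{w}^{A}\otimes\openone^{B})$ is positive semidefinite and has a nonzero off-diagonal entry, so its trace is strictly positive. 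This buys you a unified argument with no case distinction and no $\theta$-tuning, at the price of invoking the polarization identity; the paper's version is more explicit and constructive about which measurement Alice performs in each case. Your treatment of the trivial direction (via the chain $C_{\mathrm{LICC}}^{A|B}\leq C_{r}^{A|B}=0$ from Eq.~(\ref{eq:inequality})) and of the asymptotic bookkeeping (typicality plus the equivalence of the two formulations of $C_{X}$ in Sec.~\ref{sec:Remarks}) is, if anything, more detailed than the paper's, which simply asserts these points.
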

\begin{proof}
As $C_{\mathrm{LICC}}^{A|B}(\rho^{AB})=0$ for any quantum-incoherent
(QI) state $\rho^{AB}$, the claim follows if we show that $\rho^{AB}$
is not QI implies $C_{\mathrm{LICC}}^{A|B}(\rho^{AB})>0$. Without
loss of generality, let the non-QI state be given as 
\begin{equation}
\rho^{AB}=\sum_{i,j}\ket{e_{i}}\bra{e_{j}}^{A}\otimes N_{ij}^{B},
\end{equation}
where $\{\ket{e_{i}}^{A}\}$ is an orthonormal basis for Alice's Hilbert
space~\footnote{The states $\ket{e_{i}}$ are not necessarily incoherent.}
and $N_{ij}^{B}$ are some operators on Bob's space. By the non-QI
assumption, at least one of the $\{N_{ij}\}$ has off-diagonal elements.

We note that $N_{ii}\geq0$, so $N_{ii}\neq0\Leftrightarrow$ $\mathrm{Tr}[N_{ii}]>0$.
If an $N_{ii}$ has off-diagonal elements, then the state of Bob after
Alice's incoherent measurement with Kraus operator $K_{i}^{A}=\ket{i}\bra{e_{i}}^{A}$
is $\rho_{i}^{B}\sim N_{ii}$, with non-zero probability $\mathrm{Tr}[N_{ii}]>0$.
Hence $C_{\mathrm{LICC}}^{A|B}(\rho^{AB})>0$. 

Let us now assume that all $N_{ii}$ are diagonal. In this case --
by the non-QI assumption -- some of the operators $N_{kl}$ must have
off-diagonal elements for some $k\neq l$. If one of the operators
$N_{kl}$ (by Hermiticity of $\rho$, we can assume $k<l$ without
loss of generality) has some off-diagonal elements, then at least
one of the operators $P:=N_{kl}+N_{kl}^{\dagger}$, $Q:=i(N_{kl}-N_{kl}^{\dagger})$
will also have off-diagonal elements. Depending on whatever the case,
Alice performs an incoherent measurement containing the Kraus operator
$K_{P}:=\ket{0}\bra{e_{P}}$ or $K_{Q}:=\ket{0}\bra{e_{Q}}$, where
we define $\ket{e_{P}}:=\cos\theta\ket{e_{k}}+\sin\theta\ket{e_{l}}$,
$\ket{e_{Q}}:=\cos\theta\ket{e_{k}}+i\sin\theta\ket{e_{l}}$, the
unknown parameter $\theta$ will be determined soon. In the first
case, the post-measurement state of Bob is given by 
\begin{equation}
\rho_{\theta}^{B}\sim\cos^{2}\theta N_{kk}+\sin^{2}\theta N_{ll}+\cos\theta\sin\theta(N_{kl}+N_{kl}^{\dagger}),
\end{equation}
which, by assumption, has off-diagonal elements. Note that since $\sin^{2}\theta,\cos^{2}\theta,\sin\theta\cos\theta$
are linearly independent functions, there is always some $0<\theta<\pi/2$
for which the trace of right-hand side is non-zero, i.e., with non-zero
probability $\rho_{\theta}^{B}$ is coherent. Similarly, in the other
case, where $i(N_{kl}-N_{kl}^{\dagger})$ is assumed to have off-diagonal
elements, the post measurement state of Bob is coherent with non-zero
probability.

Thus, whenever $\rho^{AB}$ is not QI, with non-zero probability Alice
can steer Bob's state to a coherent one, which Bob can distill by
using the methods presented by Winter and Yang~\cite{Winter2015},
so $C_{\mathrm{LICC}}^{A|B}(\rho^{AB})>0$.
\end{proof}
Since LICC is the weakest set of operations considered here, this
theorem also means that a state which is not quantum-incoherent can
be used for coherence distillation on Bob's side via any set of operations
presented above.

\subsection{Pure states}

In the following we will study the scenario where the state shared
by Alice and Bob is pure, and the aim is to distill coherence at maximal
rate on Bob's side via the sets of operations presented above. Before
we study this task, we recall the definition of coherence of assistance
given in \cite{Chitambar2015}:
\begin{equation}
C_{a}(\rho)=\max\sum_{i}p_{i}C_{r}(\ket{\psi_{i}}),
\end{equation}
where the maximum is performed over all pure state decompositions
of $\rho$. We will now prove the following lemma. 
\begin{lem}
\label{lem:pure}For any pure state $\ket{\Psi}^{AB}$ there exists
an incoherent measurement on Alice's side such that: 
\begin{equation}
\sum_{i}p_{i}C_{r}(\ket{\psi_{i}}^{B})=C_{a}(\rho^{B}),\label{eq:assistance}
\end{equation}
where $\ket{\psi_{i}}^{B}$ are Bob's post-measurement states with
corresponding probability $p_{i}$.\end{lem}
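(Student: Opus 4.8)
The plan is to reduce the statement to a known fact about the coherence of assistance, namely that the optimal pure-state decomposition realizing $C_a(\rho^B)$ can always be induced by a rank-one measurement on the purifying system $A$, and then to upgrade that measurement to an \emph{incoherent} one by composing it with Alice's dephasing. Concretely, fix a pure state $\ket{\Psi}^{AB}$ and let $\rho^B=\mathrm{Tr}_A\ket{\Psi}\!\bra{\Psi}$. By the standard purification/steering argument (the HJW theorem), for \emph{every} pure-state decomposition $\rho^B=\sum_i p_i\ket{\psi_i}\!\bra{\psi_i}$ there is a measurement $\{M_i^A\}$ on Alice's side — with rank-one POVM elements $M_i^A=\ket{m_i}\!\bra{m_i}^A$ — whose outcome $i$ collapses Bob to $\ket{\psi_i}^B$ with probability $p_i$; conversely every such Alice measurement produces some decomposition. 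Choosing the decomposition that attains the maximum in $C_a(\rho^B)=\max\sum_i p_i C_r(\ket{\psi_i})$ thus yields an Alice POVM $\{\ket{m_i}\!\bra{m_i}^A\}$ for which $\sum_i p_i C_r(\ket{\psi_i}^B)=C_a(\rho^B)$.

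The remaining step — and the one point that needs care — is that this optimal Alice measurement need not be incoherent: the vectors $\ket{m_i}^A$ are arbitrary. The fix is to precede it by a projective measurement of Alice in her incoherent basis, i.e. to replace each Kraus operator $\ket{m_i}\!\bra{m_i}^A$ by the family $\{\ket{m_i}\!\bra{m_i}^A\,\Pi_k^A\}_{k}$ with $\Pi_k^A=\ket{k}\!\bra{k}^A$, and then — crucially — to compose \emph{from the left} with an incoherent isometry that relabels $\ket{m_i}^A\mapsto\ket{0}^A$ (or simply discard Alice's system). Each resulting Kraus operator has the form $\ket{0}\!\bra{k}^A$ up to a scalar, hence is incoherent; completeness is preserved because $\sum_{i,k}\Pi_k^A\ket{m_i}\!\bra{m_i}^A\ket{m_i}\!\bra{m_i}^A\Pi_k^A$ — wait, more simply, $\sum_i \ket{m_i}\!\bra{m_i}^A=\openone^A$ already, and $\{\Pi_k^A\}$ is a resolution of the identity, so the refined family is a valid incoherent instrument on Alice.

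I must then check that this refinement does not destroy the value in Eq.~(\ref{eq:assistance}). Inserting $\Pi_k^A$ before the rank-one projection does nothing to the post-measurement \emph{Bob} states: the (unnormalized) conditional state of Bob after outcome $(i,k)$ is $\bra{m_i}^A\Pi_k^A\ket{\Psi}^{AB}\bra{\Psi}^{AB}\Pi_k^A\ket{m_i}^A$, but since $\ket{m_i}\!\bra{m_i}^A$ already projects onto a single vector, the extra $\Pi_k^A$ only rescales the branch by $|\braket{k|m_i}|^2$ without changing the direction of Bob's vector. Summing the rescaled branches over $k$ reconstitutes exactly the original outcome $i$ with state $\ket{\psi_i}^B$ and probability $p_i=\sum_k p_i|\braket{k|m_i}|^2$. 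Therefore $\sum_{i,k}p_{ik}C_r(\ket{\psi_i}^B)=\sum_i p_i C_r(\ket{\psi_i}^B)=C_a(\rho^B)$, which is precisely Eq.~(\ref{eq:assistance}). The main obstacle is thus purely bookkeeping — making sure the "incoherentization'' step (left-composition with a relabeling incoherent operation plus refinement by $\{\Pi_k^A\}$) is legitimate and leaves Bob's conditional ensemble untouched — rather than anything deep; the structural content is entirely carried by the HJW steering theorem and the definition of $C_a$.
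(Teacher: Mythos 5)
Your starting point is fine: by the HJW/steering argument, the optimal ensemble for $C_a(\rho^B)$ is induced by some rank-one POVM $\{\ket{m_i}\!\bra{m_i}^A\}$ on Alice, with $\sum_i\ket{m_i}\!\bra{m_i}^A=\openone^A$. The error is in your ``incoherentization'' step. With Kraus operators $\ket{m_i}\!\bra{m_i}^A\,\Pi_k^A$ the incoherent projection acts \emph{first}, and since $\bra{m_i}\Pi_k^A=\braket{m_i|k}\bra{k}$, the branch $(i,k)$ leaves Bob in a state proportional to $(\bra{k}^A\otimes\openone^B)\ket{\Psi}^{AB}$, i.e.\ the state steered by the incoherent-basis outcome $k$, not by $\ket{m_i}$. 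Your claim that the extra $\Pi_k^A$ ``only rescales the branch without changing the direction of Bob's vector'' is therefore false: the direction does change (it no longer depends on $i$ at all), and summing over $k$ does not reconstitute the ensemble $\{p_i,\ket{\psi_i}^B\}$. Concretely, for $\ket{\Psi}^{AB}=(\ket{00}+\ket{11})/\sqrt{2}$ your protocol always collapses Bob to $\ket{0}$ or $\ket{1}$ and yields the value $0$, whereas $C_a(\rho^B)=1$ (attained by steering with the $\ket{\pm}$ basis); so the construction fails exactly where the lemma has content.

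The repair is simple and shows the dephasing was never needed: an operator of the form $K_i^A=\ket{i}\!\bra{m_i}^A$ (incoherent output, arbitrary bra) is already an incoherent Kraus operator, since it maps every incoherent basis state to a multiple of $\ket{i}$, and the family is complete because $\sum_i K_i^{\dagger}K_i=\sum_i\ket{m_i}\!\bra{m_i}=\openone$. Acting with these on $\ket{\Psi}^{AB}$ steers Bob to exactly $\ket{\psi_i}^B$ with probability $p_i$, proving Eq.~(\ref{eq:assistance}). (If you insist on a dephasing, it must come \emph{after} the steering measurement, giving Kraus operators proportional to $\ket{k}\!\bra{m_i}$, which indeed leaves Bob's ensemble untouched.) This repaired argument is essentially the paper's proof, which writes $\ket{\Psi}^{AB}=\sum_i\sqrt{p_i}\ket{e_i}^A\ket{\psi_i}^B$ for the optimal ensemble and uses $K_i^A=\ket{i}\!\bra{e_i}^A$; your POVM formulation would even be slightly more general (it does not require the optimal ensemble to have at most $d_A$ members), but the step you dismissed as ``purely bookkeeping'' is precisely where your proof breaks.
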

\begin{proof}
Note that any pure state can be written as 
\begin{equation}
\ket{\Psi}^{AB}=\sum_{i}\sqrt{p_{i}}\ket{e_{i}}^{A}\ket{\psi_{i}}^{B},
\end{equation}
where the states $\ket{e_{i}}^{A}$ are mutually orthogonal (but not
necessarily incoherent), and the states $\ket{\psi_{i}}^{B}$ together
with probabilities $p_{i}$ fulfill Eq.~(\ref{eq:assistance}). The
desired incoherent measurement on Alice's side now consists of the
following incoherent Kraus operators: $K_{i}^{A}=\ket{i}\bra{e_{i}}^{A}$.
It can be verified by inspection that Bob's post-measurement states
indeed fulfill Eq.~(\ref{eq:assistance}). This completes the proof
of the lemma.
\end{proof}
Lemma \ref{lem:pure} implies that for any pure state $C_{\mathrm{LICC}}^{A|B}$
is bounded below by the regularized coherence of assistance of Bob's
reduced state:
\begin{equation}
C_{a}^{\infty}(\rho^{B})\leq C_{\mathrm{LICC}}^{A|B}(\ket{\Psi}^{AB}),\label{eq:bound-1}
\end{equation}
where the regularized coherence of assistance is defined as~\cite{Chitambar2015}
$C_{a}^{\infty}(\rho)=\lim_{n\rightarrow\infty}C_{a}(\rho^{\otimes n})/n$.
To prove this statement, consider the situation where Alice and Bob
share $n\cdot m$ copies of the pure state $\ket{\Psi}=\ket{\Psi}^{AB}$.
Using Lemma~\ref{lem:pure}, it follows that in the limit of large
$m$ Alice and Bob can use $n$ copies of $\ket{\Psi}$ to extract
coherence at rate $C_{a}(\rho_{B}^{\otimes n})$ on Bob's side, and
thus 
\begin{equation}
C_{a}(\rho_{B}^{\otimes n})\leq C_{\mathrm{LICC}}^{A|B}(\ket{\Psi}^{\otimes n}).
\end{equation}
The proof of Eq.~(\ref{eq:bound-1}) is complete by dividing this
inequality by $n$ and taking the limit $n\rightarrow\infty$. Equipped
with these results we are now in position to prove the following theorem.
\begin{thm}
\label{thm:pure}For any bipartite pure state $\ket{\Psi}=\ket{\Psi}^{AB}$
the following equality holds:
\begin{align}
C_{\mathrm{LICC}}^{A|B}(\ket{\Psi}) & =C_{\mathrm{LQICC}}^{A|B}(\ket{\Psi})=C_{\mathrm{SI}}^{A|B}(\ket{\Psi})\\
 & =C_{\mathrm{SQI}}^{A|B}(\ket{\Psi})=C_{r}^{A|B}(\ket{\Psi})=S(\Delta(\rho^{B})).\nonumber 
\end{align}
\end{thm}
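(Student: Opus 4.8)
The plan is to squeeze all five quantities between the two extremes of the chain~(\ref{eq:inequality}). Since that chain already provides
$C_{\mathrm{LICC}}^{A|B}\le C_{\mathrm{LQICC}}^{A|B}\le C_{\mathrm{SI}}^{A|B}=C_{\mathrm{SQI}}^{A|B}\le C_{r}^{A|B}$,
it suffices to establish the single equality $C_{r}^{A|B}(\ket{\Psi})=S(\Delta(\rho^{B}))$ together with the lower bound $C_{\mathrm{LICC}}^{A|B}(\ket{\Psi})\ge S(\Delta(\rho^{B}))$; the theorem then follows immediately by transitivity.

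For the upper endpoint I would evaluate $C_{r}^{A|B}$ on a pure state using the closed formula~(\ref{eq:QIre-2}): since $S(\ket{\Psi}\bra{\Psi})=0$, one has $C_{r}^{A|B}(\ket{\Psi})=S(\Delta^{B}(\ket{\Psi}\bra{\Psi}))$. To identify this with $S(\Delta(\rho^{B}))$, I would realize the dephasing $\Delta^{B}$ by adjoining an incoherent ancilla $B'$, applying the isometry $V:\ket{n}^{B}\mapsto\ket{n}^{B}\ket{n}^{B'}$ that coherently copies Bob's incoherent basis, and discarding $B'$. The global state $(\openone^{A}\otimes V)\ket{\Psi}$ on $ABB'$ is pure, so $S(\Delta^{B}(\ket{\Psi}\bra{\Psi}))=S(\mathrm{Tr}_{B'}[\cdot])$ equals the entropy of the $B'$ marginal, which is exactly $\Delta(\rho^{B})$ in the copy register. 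Hence $C_{r}^{A|B}(\ket{\Psi})=S(\Delta(\rho^{B}))$ (this can also be extracted from \cite{Chitambar2015}).

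For the lower endpoint I would invoke Eq.~(\ref{eq:bound-1}), already established via Lemma~\ref{lem:pure}, which gives $C_{a}^{\infty}(\rho^{B})\le C_{\mathrm{LICC}}^{A|B}(\ket{\Psi})$. It then remains to show $C_{a}^{\infty}(\rho^{B})=S(\Delta(\rho^{B}))$. The inequality $C_{a}^{\infty}\le S(\Delta(\cdot))$ is immediate from concavity of the von Neumann entropy: for any decomposition $\rho=\sum_{i}p_{i}\ket{\psi_{i}}\bra{\psi_{i}}$ one has $\sum_{i}p_{i}C_{r}(\ket{\psi_{i}})=\sum_{i}p_{i}S(\Delta(\ket{\psi_{i}}\bra{\psi_{i}}))\le S(\sum_{i}p_{i}\Delta(\ket{\psi_{i}}\bra{\psi_{i}}))=S(\Delta(\rho))$, so $C_{a}(\rho)\le S(\Delta(\rho))$ and hence $C_{a}^{\infty}(\rho)\le S(\Delta(\rho))$. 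The reverse inequality $C_{a}^{\infty}(\rho^{B})\ge S(\Delta(\rho^{B}))$ is the achievability statement for the regularized coherence of assistance proven in \cite{Chitambar2015}; its proof parallels the Smolin--Verstraete--Winter argument for the entanglement of assistance, building on $n$ copies a pure-state decomposition of $\rho^{\otimes n}$ whose dephased spectra are nearly flat on the typical subspace of $\Delta(\rho^{\otimes n})$, so that the average coherence approaches $nS(\Delta(\rho))$.

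Combining the pieces, $S(\Delta(\rho^{B}))=C_{a}^{\infty}(\rho^{B})\le C_{\mathrm{LICC}}^{A|B}(\ket{\Psi})$ and, by~(\ref{eq:inequality}), $C_{\mathrm{LICC}}^{A|B}(\ket{\Psi})\le C_{\mathrm{LQICC}}^{A|B}(\ket{\Psi})\le C_{\mathrm{SI}}^{A|B}(\ket{\Psi})=C_{\mathrm{SQI}}^{A|B}(\ket{\Psi})\le C_{r}^{A|B}(\ket{\Psi})=S(\Delta(\rho^{B}))$, forcing all of them to coincide. The only genuinely nontrivial ingredient is the achievability bound $C_{a}^{\infty}(\rho^{B})\ge S(\Delta(\rho^{B}))$, which I expect to be the main obstacle; everything else is a direct assembly of results already available above.
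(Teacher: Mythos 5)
Your proposal is correct and follows essentially the same route as the paper: squeeze all $C_{X}^{A|B}$ between the lower bound $C_{a}^{\infty}(\rho^{B})\leq C_{\mathrm{LICC}}^{A|B}$ from Eq.~(\ref{eq:bound-1}) and the upper bound $C_{X}^{A|B}\leq C_{r}^{A|B}$ from Eq.~(\ref{eq:inequality}), then close the sandwich with the pure-state identity $C_{a}^{\infty}(\rho^{B})=C_{r}^{A|B}(\ket{\Psi})=S(\Delta(\rho^{B}))$ from \cite{Chitambar2015}. The extra details you supply (the purification/copy-isometry evaluation of $C_{r}^{A|B}$ on pure states and the concavity bound $C_{a}\leq S(\Delta(\cdot))$) are correct but simply make explicit what the paper imports from that reference.
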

\begin{proof}
Combining Eqs.~(\ref{eq:inequality}) and (\ref{eq:bound-1}) we
arrive at the inequality
\begin{equation}
C_{a}^{\infty}(\rho^{B})\leq C_{X}^{A|B}(\ket{\Psi})\leq C_{r}^{A|B}(\ket{\Psi}),
\end{equation}
where $X$ is one of the sets considered above. The proof is complete
by using the following equality which holds for all pure states \cite{Chitambar2015}:
\begin{equation}
C_{r}^{A|B}(\ket{\Psi})=C_{a}^{\infty}(\rho^{B})=S(\Delta(\rho^{B})).
\end{equation}

\end{proof}
This result is surprising: it implies that the performance of the
protocol does not depend on the particular set of operations performed
by Alice and Bob. In particular, the optimal performance can already
be reaches by the weakest set of operations LICC, which restricts
both Alice and Bob to local incoherent operations and classical communication.
A better performance cannot be achieved if Alice is allowed to perform
arbitrary quantum operations on her side (LQICC), and even if Alice
and Bob have access to the most general set of operations considered
here (SQI). This statement is true whenever Alice and Bob share a
pure state.

\subsection{Maximally correlated states}

Here we will consider assisted coherence distillations for states
of the form 
\begin{equation}
\rho^{AB}=\sum_{i,j}\rho_{ij}\ket{ii}\bra{jj}^{AB}.\label{eq:mc}
\end{equation}
States of this form are known as maximally correlated states~\cite{Rains2001}.
However, note that the family of states given in Eq.~(\ref{eq:mc})
does not contain all maximally correlated states, since $\ket{i}^{A}$
and $\ket{j}^{B}$ are incoherent states of Alice and Bob respectively.
We will call these states \emph{maximally correlated in the incoherent
basis}. As we show in the following proposition, also for this family
of states the inequalities (\ref{eq:inequality}) become equalities.
\begin{prop}
For any state $\rho=\rho^{AB}$ which is maximally correlated in the
incoherent basis the following equality holds:
\begin{align}
C_{\mathrm{LICC}}^{A|B}(\rho) & =C_{\mathrm{LQICC}}^{A|B}(\rho)=C_{\mathrm{SI}}^{A|B}(\rho)=C_{\mathrm{SQI}}^{A|B}(\rho)\nonumber \\
 & =C_{r}^{A|B}(\rho)=S(\Delta^{B}(\rho))-S(\rho).
\end{align}
\end{prop}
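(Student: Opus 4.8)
The plan is to combine the inequality chain~(\ref{eq:inequality}) with a matching lower bound. Equation~(\ref{eq:inequality}) already gives $C_{\mathrm{LICC}}^{A|B}(\rho)\le C_{\mathrm{LQICC}}^{A|B}(\rho)\le C_{\mathrm{SI}}^{A|B}(\rho)=C_{\mathrm{SQI}}^{A|B}(\rho)\le C_{r}^{A|B}(\rho)$, and by~(\ref{eq:QIre-2}) we have $C_{r}^{A|B}(\rho)=S(\Delta^{B}(\rho))-S(\rho)$, so it is enough to prove $C_{\mathrm{LICC}}^{A|B}(\rho)\ge S(\Delta^{B}(\rho))-S(\rho)$ for every $\rho$ maximally correlated in the incoherent basis. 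The first step is an algebraic reduction: writing $\rho^{AB}=\sum_{i,j}\rho_{ij}\ket{ii}\bra{jj}^{AB}$, the operator $\tilde{\rho}:=\sum_{i,j}\rho_{ij}\ket{i}\bra{j}^{B}$ obtained by compressing $\rho^{AB}$ with the isometry $\ket{i}^{B}\mapsto\ket{ii}^{AB}$ is a bona fide density matrix on Bob's Hilbert space, and a direct evaluation of the entropies shows $S(\Delta^{B}(\rho))=S(\Delta(\tilde{\rho}))$ and $S(\rho)=S(\tilde{\rho})$, hence $C_{r}^{A|B}(\rho)=S(\Delta(\tilde{\rho}))-S(\tilde{\rho})=C_{d}(\tilde{\rho})$, the distillable coherence of the single-system state $\tilde{\rho}$.

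The heart of the argument is to show that LICC suffices to transform $\rho^{AB}$ deterministically into $\ket{0}\bra{0}^{A}\otimes\tilde{\rho}^{B}$, i.e.\ to ``teleport'' the correlation matrix onto Bob as local coherence. Let $d$ be the dimension of the correlated subspace, set $\omega=e^{2\pi i/d}$, and let Alice perform the incoherent measurement with Kraus operators $K_{m}=\ket{0}\bra{e_{m}}^{A}$, $m=0,\dots,d-1$, where $\ket{e_{m}}=\tfrac{1}{\sqrt{d}}\sum_{k}\omega^{mk}\ket{k}^{A}$. Each $K_{m}$ is incoherent since it sends every incoherent basis vector of Alice to a multiple of $\ket{0}^{A}$, and $\sum_{m}K_{m}^{\dagger}K_{m}=\sum_{m}\ket{e_{m}}\bra{e_{m}}=\openone^{A}$ because $\{\ket{e_{m}}\}$ is orthonormal. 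A short calculation gives that outcome $m$ occurs with probability $1/d$ and leaves Bob in the state $U_{m}\tilde{\rho}U_{m}^{\dagger}$ with $U_{m}=\sum_{k}\omega^{-mk}\ket{k}\bra{k}^{B}$ an incoherent diagonal unitary. Alice communicates $m$ through the classical channel, Bob applies $U_{m}^{\dagger}$, and is then left with $\tilde{\rho}$ irrespective of the outcome.

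Finally I would iterate this copy by copy: applied to $\rho^{\otimes n}$, the above LICC protocol produces $\tilde{\rho}^{\otimes n}$ on Bob's side with certainty, after which Bob runs the asymptotically optimal incoherent distillation protocol of Winter and Yang~\cite{Winter2015} on his $n$ copies, extracting maximally coherent qubits at rate $C_{d}(\tilde{\rho})$ with error tending to zero. Since each stage is a local incoherent operation supplemented by classical communication, the composite map lies in LICC, so $C_{\mathrm{LICC}}^{A|B}(\rho)\ge C_{d}(\tilde{\rho})=C_{r}^{A|B}(\rho)$; together with~(\ref{eq:inequality}) this squeezes all five quantities to the common value $S(\Delta^{B}(\rho))-S(\rho)$. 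The step needing the most care is the verification that the ``Fourier-type'' measurement $\{K_{m}\}$ is really an incoherent operation and that the leftover operation on Bob is a diagonal unitary: this works precisely because the incoherence condition constrains only the image of the basis states (the ``ket side'' of a Kraus operator) and not the ``bra side'', exactly as in the construction used in the proof of Theorem~\ref{thm:nonzero}, where Kraus operators of the form $\ket{0}\bra{e}^{A}$ with $\ket{e}$ coherent already appear.
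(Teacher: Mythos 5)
Your proof is correct and follows essentially the same route as the paper: an upper bound from the chain~(\ref{eq:inequality}) with the closed form~(\ref{eq:QIre-2}), and a matching lower bound from an explicit LICC protocol in which Alice measures in a maximally coherent (Fourier) basis with incoherent Kraus operators of the form $\ket{0}\bra{e_m}^A$ and Bob undoes the resulting phases with an incoherent diagonal unitary before running Winter--Yang distillation. Your only additions are cosmetic (making the probabilities, the compressed state $\tilde{\rho}$, and the identification $C_{r}^{A|B}(\rho)=C_{d}(\tilde{\rho})$ explicit), which the paper leaves as a straightforward verification.
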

\begin{proof}
For proving this statement it is enough to prove the equality 
\begin{equation}
C_{\mathrm{LICC}}^{A|B}(\rho)=S(\Delta^{B}(\rho))-S(\rho).
\end{equation}
For this, we will present an LICC protocol achieving the above rate.
In particular, we will show that there exist an incoherent measurement
on Alice's side such that every post-measurement state of Bob has
coherence equal to $S(\Delta^{B}(\rho))-S(\rho)$. The corresponding
incoherent Kraus operators of Alice are given by $K_{j}^{A}=\ket{j}\bra{\psi_{j}}^{A}$,
where the states $\ket{\psi_{j}}$ are mutually orthogonal, maximally
coherent, and form a complete basis~\footnote{The existence of such states follows from the discrete Fourier transform.}.
Since the states $\ket{\psi_{j}}$ are all maximally coherent, they
can be written as $\ket{\psi_{j}}=1/\sqrt{d_{A}}\sum_{k}e^{i\phi_{k}^{j}}\ket{k}$
with some phases $\phi_{k}^{j}$, and $d_{A}$ is the dimension of
$A$. The corresponding post-measurement states of Bob are then given
by 
\begin{equation}
\rho_{j}^{B}=\sum_{k,l}e^{i(\phi_{l}^{j}-\phi_{k}^{j})}\rho_{kl}\ket{k}\bra{l}^{B}.
\end{equation}
If we now introduce the incoherent unitary $U_{j}=\sum_{k}e^{i\phi_{k}^{j}}\ket{k}\bra{k}$,
we see that this unitary transforms the state $\rho_{j}^{B}$ to the
state 
\begin{equation}
U_{j}\rho_{j}^{B}U_{j}^{\dagger}=\sum_{k,l}\rho_{kl}\ket{k}\bra{l}^{B}.
\end{equation}
Since the relative entropy of coherence is invariant under incoherent
unitaries, it follows that all states $\rho_{j}^{B}$ have the same
relative entropy of coherence: 
\begin{equation}
C_{r}(\rho_{j}^{B})=C_{r}\left(U_{j}\rho_{j}^{B}U_{j}^{\dagger}\right)=C_{r}\left(\sum_{k,l}\rho_{kl}\ket{k}\bra{l}^{B}\right).
\end{equation}
It is straightforward to verify that the right-hand side of this expression
is equal to $S(\Delta^{B}(\rho))-S(\rho)$, which completes the proof
of the proposition.
\end{proof}
The above proposition can also be generalized to states of the form
\begin{equation}
\rho^{AB}=\sum_{i,j}\rho_{ij}U\ket{i}\bra{j}^{A}U^{\dagger}\otimes\ket{i}\bra{j}^{B},
\end{equation}
where the unitary $U$ acts on the subsystem of Alice. In this case,
the proposition can be proven in the same way, by applying the incoherent
Kraus operators $K_{j}^{A}=\ket{j}\bra{\psi_{j}}^{A}U^{\dagger}$
on Alice's side. 

These results show that the inequality (\ref{eq:inequality}) reduces
to equality in a large number of scenarios, including all pure states,
states which are maximally correlated in the incoherent basis, and
even all states which are obtained from the latter class by applying
local unitaries on Alice's subsystem. However, it remains open if
Eq.~(\ref{eq:inequality}) is a strict inequality for any mixed state.

\subsection{\label{sec:Remarks}Remarks on the definition of $C_{X}$}

In the following we will provide some remarks on the definition of
$C_{X}$, where the set of operations $X$ is either LICC, LQICC,
SI, or SQI. First, we note that the definition of $C_{X}$ given in
Eq.~(\ref{eq:Cx}) is equivalent to the following:
\begin{equation}
C_{X}^{A|B}(\rho)=\sup\left\{ R:\lim_{n\rightarrow\infty}\left(\inf_{\Lambda\in X}\left\Vert \mathrm{Tr}_{A}\left[\Lambda\left[\rho^{\otimes n}\right]\right]-\Psi_{2}^{\otimes\left\lfloor Rn\right\rfloor }\right\Vert \right)=0\right\} ,\label{eq:Cx-1}
\end{equation}
where$\left\lfloor x\right\rfloor $ is the largest integer below
or equal to $x$ and $\Psi_{2}=\ket{\Psi_{2}}\bra{\Psi_{2}}^{B}$
is a maximally coherent single-qubit state on Bob's side. To see that
the expressions (\ref{eq:Cx}) and (\ref{eq:Cx-1}) are indeed equivalent
it is enough to note that every set of operations $X$ includes all
incoherent operations on Bob's side, and that the theory of quantum
coherence is asymptotically reversible for pure states, i.e., a state
$\ket{\psi_{1}}$ with distillable coherence $c_{1}$ can be asymptotically
converted into any other state $\ket{\psi_{2}}$ with distillable
coherence $c_{2}$ at rate $c_{1}/c_{2}$ \cite{Winter2015}.

In the above discussion we implicitly assumed that the Hilbert space
of Alice and Bob has a fixed finite dimension, and that the incoherent
operations performed by the parties preserve their dimension. One
might wonder if the performance of any of the assisted distillation
protocols $X$ discussed above changes if this assumption is relaxed,
i.e., if Alice and Bob have access to additional local incoherent
ancillas. This amounts to considering operations on the total state
of the form $\rho^{AB}\otimes\sigma^{A'}\otimes\sigma^{B'}$, where
\begin{eqnarray}
\sigma^{A'}=\ket{0}\bra{0}^{A'} & \,\,\,\,\textrm{and}\,\,\,\, & \sigma^{B'}=\ket{0}\bra{0}^{B'}
\end{eqnarray}
are additional incoherent states of Alice and Bob respectively. As
we will see below, local incoherent ancillas cannot improve the performance
of the procedure as long as SI and SQI operations are considered.
For this, we will first prove the following lemma.
\begin{lem}
\label{lem:ancillas}For any SI operation $\tilde{\Lambda}_{\mathrm{SI}}$
acting on the state $\rho^{AB}\otimes\sigma^{A'}\otimes\sigma^{B'}$
there exists another SI operation $\Lambda_{\mathrm{SI}}$ acting
on $\rho^{AB}$ such that the resulting state of $AB$ is the same:
\begin{equation}
\Lambda_{\mathrm{SI}}[\rho^{AB}]=\mathrm{Tr}_{A'B'}\left[\tilde{\Lambda}_{\mathrm{SI}}[\rho^{AB}\otimes\sigma^{A'}\otimes\sigma^{B'}]\right].\label{eq:ancillas}
\end{equation}
\end{lem}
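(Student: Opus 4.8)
The plan is to show that any SI operation $\tilde{\Lambda}_{\mathrm{SI}}$ on $\rho^{AB}\otimes\sigma^{A'}\otimes\sigma^{B'}$ can be simulated by an SI operation on $\rho^{AB}$ alone, by absorbing the preparation of the incoherent ancillas $\ket{0}^{A'}$ and $\ket{0}^{B'}$ into the Kraus operators and then discarding the ancilla registers. First I would write $\tilde{\Lambda}_{\mathrm{SI}}$ in its separable incoherent Kraus form, $\tilde{\Lambda}_{\mathrm{SI}}[\cdot]=\sum_i (\tilde A_i\otimes\tilde B_i)(\cdot)(\tilde A_i\otimes\tilde B_i)^{\dagger}$, where $\tilde A_i$ acts on $AA'$ and $\tilde B_i$ acts on $BB'$, each incoherent with respect to the (product) incoherent basis of the enlarged spaces, and $\sum_i \tilde A_i^{\dagger}\tilde A_i\otimes\tilde B_i^{\dagger}\tilde B_i=\openone^{AA'BB'}$.

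The key step is to define new Kraus operators by composing with the ancilla preparation isometries. Let $V_{A'}:\mathcal H_A\to\mathcal H_{AA'}$, $V_{A'}\ket{\psi}=\ket{\psi}\otimes\ket{0}^{A'}$, and similarly $V_{B'}$; note these are incoherent isometries (they map incoherent basis states to incoherent basis states). Then set $A_{i,\mu}:=(\openone^A\otimes\bra{\mu}^{A'})\,\tilde A_i\,V_{A'}$ and $B_{i,\nu}:=(\openone^B\otimes\bra{\nu}^{B'})\,\tilde B_i\,V_{B'}$, where $\{\ket{\mu}^{A'}\}$, $\{\ket{\nu}^{B'}\}$ are the incoherent bases of the ancillas — i.e. I trace out $A'B'$ by summing over these outcomes. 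I would then verify three things: (i) each $A_{i,\mu}$ and $B_{i,\nu}$ is incoherent, since it is a composition $\langle\mu|\cdot V_{A'}$ of incoherent maps with the incoherent $\tilde A_i$; (ii) the completeness relation holds, $\sum_{i,\mu,\nu} A_{i,\mu}^{\dagger}A_{i,\mu}\otimes B_{i,\nu}^{\dagger}B_{i,\nu}=\openone^{AB}$, which follows by inserting $\sum_\mu\ket{\mu}\!\bra{\mu}^{A'}=\openone^{A'}$ and $\sum_\nu\ket{\nu}\!\bra{\nu}^{B'}=\openone^{B'}$ into the completeness relation for $\{\tilde A_i\otimes\tilde B_i\}$ and using $V_{A'}^{\dagger}V_{A'}=\openone^A$, $V_{B'}^{\dagger}V_{B'}=\openone^B$; and (iii) the resulting channel $\Lambda_{\mathrm{SI}}[\rho^{AB}]:=\sum_{i,\mu,\nu}(A_{i,\mu}\otimes B_{i,\nu})\rho^{AB}(A_{i,\mu}\otimes B_{i,\nu})^{\dagger}$ equals $\mathrm{Tr}_{A'B'}[\tilde\Lambda_{\mathrm{SI}}[\rho^{AB}\otimes\sigma^{A'}\otimes\sigma^{B'}]]$, which is just the statement that preparing $\ket{0}^{A'}\ket{0}^{B'}$, applying $\tilde A_i\otimes\tilde B_i$, and tracing out $A'B'$ is reproduced by the Kraus decomposition obtained from $\langle\mu|\langle\nu|(\cdot)V_{A'}V_{B'}$.

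I do not anticipate a serious obstacle; the only point requiring a little care is bookkeeping of the incoherent bases — one must make sure that "incoherent on $AA'$" really means incoherent with respect to the product basis $\{\ket{k}^A\ket{\mu}^{A'}\}$, so that projecting onto a fixed ancilla basis state $\ket{\mu}^{A'}$ and restricting to the $A'=\ket{0}$ input sector yields an operator that is incoherent on $A$. Once this is granted, the composition argument goes through mechanically. One should also remark (as the surrounding text does for the SQI case) that the identical argument with $\tilde\Lambda_{\mathrm{SQI}}$ — dropping the incoherence requirement on the $A$-side operators — proves the analogous statement for SQI operations, so that Lemma~\ref{lem:ancillas} combined with Proposition~\ref{prop:Csi=00003DCsqi} and the monotonicity of the relevant quantities shows that local incoherent ancillas do not change $C_{\mathrm{SI}}^{A|B}$ or $C_{\mathrm{SQI}}^{A|B}$.
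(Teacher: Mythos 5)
Your proposal is correct and follows essentially the same route as the paper: the operators $A_{i,\mu}=(\openone^{A}\otimes\bra{\mu}^{A'})\tilde{A}_{i}V_{A'}$ you define coincide with the paper's $A_{klm}=\mathrm{Tr}_{A'}[\tilde{A}_{k}(\openone^{A}\otimes\ket{0}\bra{l}^{A'})]$ (and likewise for $B$), so both proofs absorb the incoherent ancilla preparation into the Kraus operators and trace out $A'B'$ in the incoherent basis, then check incoherence, completeness, and the channel identity exactly as you outline.
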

\begin{proof}
This can be seen explicitly, by considering the form of a general
SI operation $\tilde{\Lambda}_{\mathrm{SI}}$ acting on $\rho^{AB}\otimes\sigma^{A'}\otimes\sigma^{B'}$:
\begin{align}
\tilde{\Lambda}_{\mathrm{SI}}[\rho^{AB}\otimes\sigma^{A'}\otimes\sigma^{B'}] & =\sum_{i}\tilde{A}_{i}\otimes\tilde{B}_{i}\left(\rho^{AB}\otimes\sigma^{A'}\otimes\sigma^{B'}\right)\tilde{A}_{i}^{\dagger}\otimes\tilde{B}_{i}^{\dagger},
\end{align}
where the operators $\tilde{A}_{i}$ and $\tilde{B}_{i}$ act on $AA'$
and $BB'$ respectively. The corresponding SI operation $\Lambda_{\mathrm{SI}}$
satisfying Eq.~(\ref{eq:ancillas}) is then given by
\begin{align}
\Lambda_{\mathrm{SI}}[\rho^{AB}] & =\sum_{k,l,m}A_{klm}\otimes B_{klm}\left(\rho^{AB}\right)A_{klm}^{\dagger}\otimes B_{klm}^{\dagger}.
\end{align}
The incoherent operators $A_{klm}$ and $B_{klm}$ depend on the operators
$\tilde{A}_{i}$ and $\tilde{B}_{i}$ and have the following explicit
form: 
\begin{eqnarray}
A_{klm} & = & \mathrm{Tr}_{A'}\left[\tilde{A}_{k}\left(\openone^{A}\otimes\ket{0}\bra{l}^{A'}\right)\right],\\
B_{klm} & = & \mathrm{Tr}_{B'}\left[\tilde{B}_{k}\left(\openone^{B}\otimes\ket{0}\bra{m}^{B'}\right)\right],
\end{eqnarray}
where $\{\ket{l}^{A'}\}$ is a complete set of incoherent states on
$A'$, and $\{\ket{m}^{B'}\}$ is a complete set of incoherent states
on $B'$. Using the fact that $\tilde{A}_{k}$ and $\tilde{B}_{k}$
are incoherent, it is straightforward to verify that the operators
$A_{klm}$ and $B_{klm}$ are also incoherent. Eq.~(\ref{eq:ancillas})
can also be verified by inspection.
\end{proof}
The above lemma implies that local incoherent ancillas on Alice's
or Bob's side cannot improve the performance of the protocol if SI
operations are considered. This can be seen by contradiction, assuming
that by using a state $\rho^{AB}$ and local incoherent ancillas Bob
can extract maximally coherent single-qubit states at rate $R>C_{\mathrm{SI}}^{A|B}$.
Applying Theorem \ref{thm:bound} and noting that the QI relative
entropy does not change under attaching local incoherent ancillas,
it follows that the rate $R$ is bounded above by $C_{r}^{A|B}$,
which is again bounded above by $\log_{2}d_{B}$, where $d_{B}$ is
the dimension of Bob's subsystem:
\begin{equation}
\log_{2}d_{B}\geq C_{r}^{A|B}(\rho)\geq R>C_{\mathrm{SI}}^{A|B}(\rho).
\end{equation}
The inequality $\log_{2}d_{B}\geq R$ means that if additional incoherent
ancillas would improve the procedure at all, they are not needed at
the end of the protocol and can be discarded~\footnote{In particular, this means that Bob's local system $B$ is large enough
to hold all maximally coherent states which can be potentially distilled
via SI operations.}. These results together with Lemma \ref{lem:ancillas} imply that
the rate $R$ is also reachable without additional ancillas as long
as SI operations are considered.

Very similar arguments can also be applied to the case of SQI operations.
In this case, one can prove an equivalent statement to Lemma \ref{lem:ancillas}:
for any SQI operation acting on $\rho^{AB}\otimes\sigma^{A'}\otimes\sigma^{B'}$
there exists an SQI operation acting on $\rho^{AB}$ such that the
final state of $AB$ is the same. This implies that local incoherent
ancillas cannot improve the performance in this case as well. It remains
unclear if incoherent ancillas can provide advantage for LICC or LQICC
operations. However, if Alice and Bob share a pure state, incoherent
ancillas cannot provide any advantage also in this case due to Theorem
\ref{thm:pure}.

\section{Incoherent teleportation}

In standard quantum teleportation introduced by Bennett \emph{et al}.
\cite{Bennett1993}, Alice aims to transfer her single-qubit state
to Bob by using LOCC together with one singlet. We will now consider
the task of \emph{incoherent teleportation}, which is the same as
standard teleportation up to the fact that LOCC is replaced by LICC.
This means that Alice and Bob are allowed to apply only incoherent
operations locally, and share their outcomes via a classical channel. 

It seems that the restriction to local incoherent operations provides
a severe constraint, and it is tempting to assume that Alice and Bob
will not be able to achieve perfect teleportation in this way, at
least if they have no access to additional coherent resource states.
As we will show in the following theorem, this intuition is not correct.
\begin{thm}
Perfect incoherent teleportation of an unknown state of one qubit
is possible with one singlet and two bits of classical communication.\end{thm}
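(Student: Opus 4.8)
The plan is to exhibit an explicit incoherent protocol that mimics the standard teleportation circuit, paying careful attention to the fact that the usual Bell measurement on Alice's side is \emph{coherent} and therefore not allowed. The key observation is that the obstruction is only on Alice's side: Bob's Pauli corrections $\{\openone, \sigma_x, \sigma_z, \sigma_x\sigma_z\}$ are all incoherent operators in the computational basis, so Bob's part of the protocol is already incoherent. Thus the entire problem reduces to simulating the Bell measurement by a measurement built from incoherent Kraus operators $K_i^A = \ket{i}\bra{\alpha_i}^A$, together with extra incoherent ancillas if needed, whose statistics and induced back-action on Bob coincide with the Bell measurement.

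First I would write the unknown input as $\ket{\chi}^{A_1} = a\ket{0} + b\ket{1}$ and the singlet (or any maximally entangled state) as $\ket{\Phi}^{A_2 B}$, and recall that the Bell measurement on $A_1 A_2$ can be decomposed as: apply the two-qubit incoherent unitary CNOT (controlled on $A_1$), then a Hadamard on $A_1$, then measure both qubits in the computational basis. The computational-basis measurement is manifestly incoherent. The CNOT is incoherent (it permutes the incoherent basis). The only non-incoherent element is the single Hadamard on $A_1$. So the task collapses to: realize, within LICC, the effect of "Hadamard on $A_1$ followed by a projective measurement of $A_1$ in the incoherent basis." But that composite channel is precisely a projective measurement of $A_1$ in the \emph{coherent} basis $\{\ket{\pm}\}$, and its Kraus operators are $\ket{0}\bra{+}$ and $\ket{1}\bra{-}$ — which are exactly of the incoherent form $\ket{i}\bra{\alpha_i}$ allowed in an incoherent measurement. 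Hence Alice can perform it directly as an incoherent measurement. Combining the incoherent CNOT, this incoherent "$\ket{\pm}$-measurement", and the standard incoherent computational-basis measurement of $A_2$, Alice obtains two classical outcomes with exactly the Bell-measurement statistics and the correct collapsed state on Bob.

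Next I would verify the back-action: the four outcomes occur with probability $1/4$ each, and Bob's conditional state is $\sigma_x^{m_2}\sigma_z^{m_1}\ket{\chi}$ (up to the usual identification of outcomes with Pauli labels), identical to ordinary teleportation. Alice sends the two classical bits $m_1, m_2$ over the classical channel, and Bob applies the incoherent correction $\sigma_x^{m_2}\sigma_z^{m_1}$, recovering $\ket{\chi}$ exactly. This is manifestly an LICC protocol: Alice's operations are incoherent measurements, Bob's are incoherent unitaries conditioned on classical data, and only classical communication is used; the singlet is the one allowed entangled resource and two classical bits are exchanged.

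The step I expect to be the conceptual crux — and the one worth spelling out carefully — is the claim that "Hadamard then incoherent measurement" can be repackaged as a single incoherent measurement. The subtlety is that in the resource theory of coherence a \emph{coherent} unitary like the Hadamard is forbidden as a stand-alone operation, but the \emph{composition} $\Delta \circ H$ (dephasing after Hadamard) has Kraus operators $\ket{0}\bra{+}, \ket{1}\bra{-}$ each of which maps an incoherent basis state to (a multiple of) an incoherent basis state, hence is a legitimate incoherent Kraus operator. Making this precise — i.e., checking the incoherence condition $K_i\ket{k}\sim\ket{l}$ for these operators and confirming completeness $\sum_i K_i^\dagger K_i = \openone$ — is the only place where one must be genuinely careful, but it is a short verification. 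Everything else (CNOT is incoherent, Pauli corrections are incoherent, the output state is correct) is routine. An equivalent and perhaps cleaner presentation, which I would consider using instead, is to keep Alice's ancilla explicit: Alice adjoins a fresh incoherent qubit, performs an incoherent unitary that rotates the $\ket{\pm}$ information into the computational basis of the ancilla, and then measures incoherently — this makes the "free dilation" structure transparent and sidesteps any worry about whether a composite rank-one POVM element counts as incoherent.
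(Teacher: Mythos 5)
Your proposal is correct and relies on the same two observations as the paper's proof: Bob's Pauli corrections are incoherent, and Alice's Bell measurement can be replaced by a measurement with rank-one incoherent Kraus operators of the form $\ket{i}\bra{\phi_i}$ (the paper writes them directly as $K_{i}=\ket{00}\bra{\phi_{i}}^{AA'}$, whereas you obtain the same type of operators by composing the incoherent CNOT with the incoherent measurement $\{\ket{0}\bra{+},\ket{1}\bra{-}\}$). This circuit-level decomposition is only a presentational variant of the paper's argument, so no substantive difference or gap remains.
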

\begin{proof}
To prove this statement, recall that in the standard teleportation
protocol Alice applies a Bell measurement on her qubits $A$ and $A'$
of the total initial state 
\begin{equation}
\ket{\Psi}=\ket{\psi}^{A'}\otimes\ket{\phi^{+}}^{AB},
\end{equation}
where $\ket{\phi^{+}}^{AB}=(\ket{00}+\ket{11})/\sqrt{2}$ is a maximally
entangled state and $\ket{\psi}$ is the desired state subject to
teleportation. Alice then communicates the outcome of her measurement
to Bob. Depending on the outcome of Alice's measurement, Bob either
finds his particle $B$ in the desired state $\ket{\psi}$, or he
has to additionally apply one of the Pauli matrices $\sigma_{1}$,
$i\sigma_{2}$, or $\sigma_{3}$.

It is now crucial to note that all Pauli matrices are incoherent:
$\sigma_{i}\ket{m}\sim\ket{n}$. This means that Bob can perform his
conditional rotation in an incoherent way. We will now show that also
Alice's Bell measurement can be performed in a locally incoherent
way. For this, let $\ket{\phi_{i}}$ denote the four Bell states and
consider the Kraus operators defined as $K_{i}=\ket{00}\bra{\phi_{i}}^{AA'}$.
Note that these operators are local in Alice's lab, and moreover they
are incoherent with respect to the bipartite incoherent basis of Alice.
Finally, note that these Kraus operators lead to the same post-measurement
states of Bob as the projectors $\ket{\phi_{i}}\bra{\phi_{i}}^{AA'}$.
This completes the proof of the theorem.
\end{proof}
The above theorem shows that LICC operations are indeed powerful enough
to allow for perfect teleportation. Since LICC is the weakest set
of operations considered here, the same is true also for all the other
sets LQICC, SI, and SQI: all these sets allow for perfect teleportation
of an unknown qubit with one additional singlet. These results can
be immediately extended to any system of $n$ qubits, in which case
$n$ additional singlets are required.

\section{Superiority of SQI operations in single-shot quantum state merging}

In the discussion so far, and in particular in Eqs.~(\ref{eq:inclusions}),
we have seen that the set SQI is strictly larger than LICC, LQICC,
and SI. At this point it is important to note that a larger set of
operations is not automatically more useful for real physical applications.
Nevertheless, the results presented above indeed imply the existence
of such physical tasks where the set SQI is more useful, when compared
to the other sets individually. For completeness, we will review these
results in the following.
\begin{itemize}
\item SQI is superior to SI and LICC in the task of \emph{quantum state
preparation}. In particular, by starting from an initial state $\ket{00}^{AB}$,
SQI operations can prepare all quantum-incoherent states, while only
fully incoherent states can be prepared by SI and LICC operations,
see Eqs.~(\ref{eq:SI-1}) and (\ref{eq:SQI-1}).
\item SQI is superior to LICC and LQICC in the task of \emph{quantum state
discrimination}. In particular, there exists a set of bipartite states
which can be discriminated via SQI (and also via SI), but not via
LICC and LQICC. This was discussed in detail in Section \ref{sec:SI},
see Eq.~(\ref{eq:discrimination}).
\end{itemize}

It is now interesting to note that these two arguments are unrelated,
and each of the arguments does not automatically imply the other one.
In particular, the first argument for the superiority of SQI in comparison
to SI and LICC cannot be used to show superiority in comparison to
LQICC, since the set of states that can be prepared via SQI and LQICC
is the same. On the other hand, the second argument showing superiority
of SQI in comparison to LICC and LQICC cannot be used to show superiority
in comparison to SI, since both SQI and SI are equally well suited
for the considered task, see also Section \ref{sec:SI} for more details.
It is thus natural to ask for the existence of a quantum technological
task which shows superiority of SQI operations with respect to \emph{all}
the other sets simultaneously. 

In the following, we will present such an application, which will
be based on the well-known task of quantum state merging. The latter
task was introduced and studied in \cite{Horodecki2005a,Horodecki2007},
and extended to the framework of coherence in \cite{Streltsov2016}.
In this task, three parties, Alice, Bob, and a referee share a tripartite
state $\rho^{RAB}$. The aim of the process is to send Bob's system
to Alice~\footnote{Note that in the standard approach \cite{Horodecki2005a,Horodecki2007,Streltsov2016}
Alice sends her system to Bob. Here we consider the other direction,
i.e., Bob sends his system to Alice.} while keeping the overall state intact. In contrast to \cite{Horodecki2005a,Horodecki2007,Streltsov2016},
we do not allow Alice and Bob to share any singlets, and moreover
restrict them to the sets of operations considered in this paper,
i.e., LICC, LQICC, SI, or SQI.

We will consider merging of the following tripartite state:\emph{
\begin{equation}
\rho^{RAB}=\frac{1}{9}\sum_{i}\ket{i}\!\bra{i}^{R}\otimes\ket{\psi_{i}}\!\bra{\psi_{i}}^{AB},\label{eq:merging}
\end{equation}
}where $\ket{\psi_{i}}=\ket{\alpha_{i}}\otimes\ket{\beta_{i}}$ are
nine $3\times3$ product states chosen as in Eq.~(3) in \cite{Bennett1999}
(see also the Appendix). Moreover, Alice has access to an additional
register $A'$ of dimension $3$ in an incoherent initial state $\ket{0}^{A'}$.
Alice will use this register to store Bob's system. The total final
state is given by 
\begin{equation}
\rho_{f}^{RAA'B}=\Lambda_{X}\left[\rho^{RAB}\otimes\ket{0}\!\bra{0}^{A'}\right],\label{eq:final}
\end{equation}
where $X$ denotes one of the four sets of operations considered here.
The process is successful if $\rho_{f}^{RAA'}$ is the same as $\rho^{RAB}$
up to relabeling $B$ and $A'$. Here, we consider the single-shot
scenario, i.e., the corresponding operation is applied to one copy
of the state only.

As we will now show, this task can be performed via SQI, but not via
any of the other three sets. For proving that the task can be performed
via SQI, it is enough to recall that SQI operations can be used to
distinguish the states $\ket{\psi_{i}}$. For each outcome $i$ Alice
can then locally prepare her system $AA'$ in the state $\ket{\psi_{i}}^{AA'}$,
which again can be achieved via SQI operations~\footnote{The overall SQI Kraus operators for this procedure can be given as
$K_{ij}^{AA'B}=\ket{\alpha_{i}}\!\bra{\alpha_{i}}^{A}\otimes\ket{\beta_{i}}\!\bra{j}^{A'}\otimes\ket{0}\!\bra{\beta_{i}}^{B}$.}. In the next step we will show that this task cannot be performed
via LQICC operations. This can be seen by contradiction, assuming
that LQICC operations can perform merging in this scenario, i.e.,
that

\begin{equation}
\mathrm{Tr}_{B}\left[\Lambda_{\mathrm{LQICC}}[\rho^{RAB}\otimes\ket{0}\!\bra{0}^{A'}]\right]=\frac{1}{9}\sum_{i}\ket{i}\!\bra{i}^{R}\otimes\ket{\psi_{i}}\!\bra{\psi_{i}}^{AA'},
\end{equation}
 for some LQICC protocol $\Lambda_{\mathrm{LQICC}}$. By linearity,
it must be that 
\begin{equation}
\ket{\psi_{i}}\!\bra{\psi_{i}}^{AA'}=\mathrm{Tr}_{B}\left[\Lambda_{\mathrm{LQICC}}[\ket{\psi_{i}}\!\bra{\psi_{i}}^{AB}\otimes\ket{0}\!\bra{0}^{A'}]\right],
\end{equation}
i.e., Alice and Bob could use the protocol to transfer Bob's part
of $\ket{\psi_{i}}^{AB}$ to Alice. Since the states $\ket{\psi_{i}}^{AB}$
form an orthonormal basis, this would imply that Alice and Bob could
distinguish the states $\ket{\psi_{i}}^{AB}$ via LQICC (and thus
also via LOCC), which is a contradiction to the main result of \cite{Bennett1999}.
This also proves that the task cannot be performed via LICC operations.

It now remains to show that the task cannot be performed via SI operations.
We will prove this statement by using general properties of QI relative
entropy given in Eq.~(\ref{eq:QIre-1}) and its closed expression
in Eq.~(\ref{eq:QIre-2}). In particular, recall that the QI relative
entropy cannot increase under SI operations, which implies that 
\begin{equation}
C_{r}^{RB|AA'}\left(\rho_{f}^{RAA'B}\right)\leq C_{r}^{RB|AA'}\left(\rho^{RAB}\otimes\ket{0}\!\bra{0}^{A'}\right),
\end{equation}
where $\rho_{f}$ is the final state given in Eq.~(\ref{eq:final}).
In the next step we will use the following relations: 
\begin{align}
C_{r}^{R|AA'}\left(\rho_{f}^{RAA'}\right) & \leq C_{r}^{RB|AA'}\left(\rho_{f}^{RAA'B}\right),\\
C_{r}^{RB|A}\left(\rho^{RAB}\right) & =C_{r}^{RB|AA'}\left(\rho^{RAB}\otimes\ket{0}\!\bra{0}^{A'}\right),
\end{align}
which can be proven directly from the properties of QI relative entropy.
Combining these results we arrive at the following inequality:
\begin{equation}
C_{r}^{R|AA'}\left(\rho_{f}^{RAA'}\right)\leq C_{r}^{RB|A}\left(\rho^{RAB}\right).\label{eq:contradiction}
\end{equation}

In the final step of the proof, assume that SI operations allow to
perform the aforementioned task. We will now show that this assumption
leads to a contradiction. In particular, by our assumption the final
state $\rho_{f}^{RAA'}$ must be the same as $\rho^{RAB}$, up to
relabeling $B$ and $A'$. Thus, Eq.~(\ref{eq:contradiction}) is
equivalent to 
\begin{equation}
C_{r}^{R|AB}\left(\rho^{RAB}\right)\leq C_{r}^{RB|A}\left(\rho^{RAB}\right).
\end{equation}
By applying Eq.~(\ref{eq:QIre-2}) together with the expression for
the state $\rho^{RAB}$ in Eq.~(\ref{eq:merging}) and using the
states $\ket{\psi_{i}}$ in Eq.~(3) of \cite{Bennett1999} (see also
the Appendix) we can now evaluate both sides of this inequality: 
\begin{align}
C_{r}^{R|AB}\left(\rho^{RAB}\right) & =\frac{8}{9},\\
C_{r}^{RB|A}\left(\rho^{RAB}\right) & =\frac{4}{9},
\end{align}
which is the desired contradiction. This finishes the proof that the
task considered here can be performed with SQI operations, but not
with any other set LICC, LQICC, or SI.

Thus, we presented the first example for a quantum technological application
which can be performed via SQI operations, but cannot be performed
with any of the other sets of operations considered in this paper.

\section{Conclusions}

In this paper we studied the resource theory of coherence in distributed
scenarios. In particular, we focused on the following four classes
of operations: local incoherent operations and classical communication
(LICC), local quantum-incoherent operations and classical communication
(LQICC), separable incoherent operations (SI), and separable quantum-incoherent
operations (SQI). We showed that these classes obey inclusion relations
very similar to those between LOCC and separable operations known
from entanglement theory. 

We further studied the role of these classes for the task of assisted
coherence distillation, first introduced in \cite{Chitambar2015}.
Regardless of the particular class of operations we proved that a
bipartite state can be used for coherence extraction on Bob's side
if and only if the state is not quantum-incoherent. We also showed
that the relative entropy distance to the set of quantum-incoherent
states provides an upper bound for coherence distillation on Bob's
side, a result which again does not depend on the class of operations
under scrutiny. Remarkably, both the SI and the SQI operations lead
to the same performance in this task for all mixed states. For pure
states an even stronger result has been proven: in this case all classes
of operations considered here are equivalent for assisted coherence
distillation. We also studied the task of incoherent teleportation,
which arises from standard quantum teleportation by restricting the
parties to LICC operations only. We showed that in this situation
LICC operations do not provide any restriction: perfect teleportation
of an unknown qubit can be achieved with LICC and one additional singlet.
Finally, we compared these classes on the task of single-shot quantum
state merging. In this task, SQI operations provide an advantage with
respect to all the other classes considered here.

The tools presented here can be regarded as a first step towards a
full resource theory of coherence in distributed scenarios. Indeed,
while in the course of this work we focused on the coherence framework
of Baumgratz \emph{et al.}~\cite{Baumgratz2014}, it is important
to note that the presented ideas are significantly more general. As
an example, our tools can be directly applied to the situation where
local incoherent operations are replaced by another well justified
set, such as strictly incoherent~\cite{Winter2015,Yadin2015}, translationally
invariant~\cite{Gour2008,Gour2009,Marvian2016a,Marvian2016}, or
physical incoherent operations \cite{Chitambar2016}. These alternative
frameworks have been extensively studied in recent literature, and
each of them captures the concept of coherence in a different scenario~\cite{Streltsov2016b}.
Due to the close connection between all these frameworks it is clear
that the ideas presented in this work also carry over to these concepts.
In particular, the results presented here will in general serve as
bounds for other concepts of coherence. We also expect that these
bounds are tight in several relevant situations, e.g., for pure states.
While the investigation of these questions is beyond the scope of
this work, it has the potential to provide a unified view for all
frameworks of coherence, and ultimately put the resource theory of
coherence on equal footing with other quantum resource theories, most
prominently the theory of entanglement~\cite{Horodecki2009}.

Future research in this direction is also important in the light of
the recent progress towards understanding the role of coherence in
quantum thermodynamics~\cite{Lostaglio2015a,Lostaglio2015b}. Here,
the framework of thermal operations turned out to be very useful~\cite{Janzing2000}.
These operations arise from the first and second law of thermodynamics,
and are known to be translationally invariant~\cite{Lostaglio2015a,Lostaglio2015b}.
Because of this, the tools developed in our work can also be applied
to quantum thermodynamics. This research direction has the potential
to reveal new surprising effects, similar to well-known phenomena
such as bound entanglement~\cite{Horodecki1997,Horodecki1998} in
quantum information theory, or the work-locking phenomenon~\cite{Horodecki2013b,Skrzypczyk2014,Lostaglio2015a,Korzekwa2015}
in quantum thermodynamics.

After the appearance of this article on arXiv, quantum coherence in
multipartite systems has also been discussed by other authors~\cite{Yadin2015,Ma2016,Matera2016}.
While these works also study the role of coherence for quantum-state
manipulation, their motivation is significantly different from the
concept presented here. In particular, the framework of coherence
in multipartite systems is naturally suited for studying general nonclassical
correlations such as quantum discord, as discussed in~\cite{Ma2016,Yadin2015}.
Another important research direction pursued in~\cite{Ma2016,Matera2016}
is the role of coherence in the DQC1 protocol~\cite{Knill1998}.
This quantum protocol allows for efficient evaluation of the trace
of a unitary, provided that the unitary has an efficient description
in terms of two-qubit gates. Remarkably, this protocols does not require
entanglement, while showing an exponential speedup over the best known
classical procedure~\cite{Datta2005}. The authors of~\cite{Ma2016}
present a figure of merit for this task, which is related to the consumption
of coherence in this protocol. Interestingly, while their figure of
merit vanishes for unitaries of the form $U=e^{i\phi}\openone$, it
is unclear if a classical algorithm can evaluate the trace of this
unitary efficiently~\cite{Ma2016}. In the light of these results,
we expect that the tools presented in our work can also find applications
for the DQC1 protocol and quantum computation in general. This is
beyond of the scope of the current work, and we leave it open for
future research.

We also note that the class of LICC operations was introduced independently
by Chitambar and Hsieh \cite{Chitambar2015a}. The authors study the
tasks of asymptotic state creation and distillation of entanglement
and coherence via LICC operations, i.e., local coherence is considered
as an additional resource. They also independently obtain the results
of our Theorems \ref{thm:nonzero} and \ref{thm:pure} for LICC operations.
\begin{acknowledgments}
We thank Gerardo Adesso, Eric Chitambar, Jens Eisert, Gilad Gour,
Iman Marvian, Martin B. Plenio, and Andreas Winter for discussion.
We acknowledge financial supports from the Alexander von Humboldt-Foundation,
the John Templeton Foundation, the EU grants OSYRIS (ERC-2013-AdG
Grant No. 339106), QUIC (H2020-FETPROACT-2014 No. 641122), and SIQS
(FP7-ICT-2011-9 No. 600645), the Spanish MINECO grants (FIS2013-46768,
FIS2008-01236, and FIS2013-40627-P) with the support of FEDER funds
and \textquotedblleft Severo Ochoa\textquotedblright{} Programme (SEV-2015-0522),
and the Generalitat de Catalunya grant (2014-SGR-874 and 2014-SGR-966),
and Fundació Privada Cellex.
\end{acknowledgments}

\appendix

\section*{Appendix}

Here, we list the states $\ket{\psi_{i}}=\ket{\alpha_{i}}\otimes\ket{\beta_{i}}$
from Eq.~(3) of \cite{Bennett1999}:

\begin{align*}
\ket{\psi_{1}} & =\ket{\alpha_{1}}\otimes\ket{\beta_{1}}=\ket{1}\otimes\ket{1},\\
\ket{\psi_{2}} & =\ket{\alpha_{2}}\otimes\ket{\beta_{2}}=\ket{0}\otimes\frac{1}{\sqrt{2}}(\ket{0}+\ket{1}),\\
\ket{\psi_{3}} & =\ket{\alpha_{3}}\otimes\ket{\beta_{3}}=\ket{0}\otimes\frac{1}{\sqrt{2}}(\ket{0}-\ket{1}),\\
\ket{\psi_{4}} & =\ket{\alpha_{4}}\otimes\ket{\beta_{4}}=\ket{2}\otimes\frac{1}{\sqrt{2}}(\ket{1}+\ket{2}),\\
\ket{\psi_{5}} & =\ket{\alpha_{5}}\otimes\ket{\beta_{5}}=\ket{2}\otimes\frac{1}{\sqrt{2}}(\ket{1}-\ket{2}),\\
\ket{\psi_{6}} & =\ket{\alpha_{6}}\otimes\ket{\beta_{6}}=\frac{1}{\sqrt{2}}(\ket{1}+\ket{2})\otimes\ket{0},\\
\ket{\psi_{7}} & =\ket{\alpha_{7}}\otimes\ket{\beta_{7}}=\frac{1}{\sqrt{2}}(\ket{1}-\ket{2})\otimes\ket{0},\\
\ket{\psi_{8}} & =\ket{\alpha_{8}}\otimes\ket{\beta_{8}}=\frac{1}{\sqrt{2}}(\ket{0}+\ket{1})\otimes\ket{2},\\
\ket{\psi_{9}} & =\ket{\alpha_{9}}\otimes\ket{\beta_{9}}=\frac{1}{\sqrt{2}}(\ket{0}-\ket{1})\otimes\ket{2}.
\end{align*}

\bibliographystyle{apsrev4-1}
\bibliography{literature}

\end{document}